\def\BibTeX{{\rm B\kern-.05em{\sc i\kern-.025em b}\kern-.08em
		T\kern-.1667em\lower.7ex\hbox{E}\kern-.125emX}}
\newtheorem{theorem}{Theorem}
\newcommand{\RNum}[1]{\uppercase\expandafter{\romannumeral #1\relax}}
\newtheorem{lemma}{Lemma}
\newtheorem{corollary}{Corollary}
\definecolor{lime}{HTML}{A6CE39}%公式链接索引专用
\newcommand*\bigcdot{\mathpalette\bigcdot@{.5}}
\newcommand*\bigcdot@[2]{\mathbin{\vcenter{\hbox{\scalebox{#2}{$\m@th#1\bullet$}}}}}
	\renewcommand{\maketag@@@}[1]{\hbox{\m@th\normalsize\normalfont#1}}%
\begin{document}
		\title{Optimal Sampling for Uncertainty-of-Information Minimization in a Remote Monitoring System
		\vspace{-0.5em}}
		\author{%
			\IEEEauthorblockN{Xiaomeng~Chen\IEEEauthorrefmark{1}, Aimin~Li\IEEEauthorrefmark{1}, Shaohua~Wu\IEEEauthorrefmark{1}\IEEEauthorrefmark{2}}
			\IEEEauthorblockA{%Harbin Institute of Technology (Shenzhen), 518055 Shenzhen, China\\
				\IEEEauthorrefmark{1} Harbin Institute of Technology (Shenzhen), China \\
				\IEEEauthorrefmark{2} The Department of Boradband Communication, Peng Cheng Laboratory, Shenzhen, China \\
				23s052026@stu.hit.edu.cn, liaimin@stu.hit.edu.cn, hitwush@hit.edu.cn}
				\vspace{-3em}
			%  \and
			% \IEEEauthorblockN{Aimin~Li}
			%  \IEEEauthorblockA{Harbin Institute of Technology (Shenzhen)\\
				%	518055 Shenzhen, China\\
				%	liaimin@stu.hit.edu.cn}  
			%  \and
			%  \IEEEauthorblockN{Shaohua~Wu}
			%  \IEEEauthorblockA{Harbin Institute of Technology (Shenzhen)\\
				%	518055 Shenzhen, China\\
				%	hitwush@hit.edu.cn}
			
		%	\thanks{	
				%Aimin Li has contributed equally to this work.  This work has been supported in part by the National Key Research and Development Program of China under Grant no. 2020YFB1806403, and in part by the National Natural Science Foundation of China under Grant nos. 61871147, 62071141, 61831008, 62027802,  and in part by the Shenzhen Municipal Science and Technology Plan under Grant no. GXWD20201230155427003-20200730122528002, and in part by the Major Key Project of PCL under Grant PCL2021A03-1. (Corresponding author: Shaohua Wu.)				
		%	}
		}				
		
		% make the title area
		\maketitle
		%		\allowdisplaybreaks
		% As a general rule, do not put math, special symbols or citations
		% in the abstract or keywords.
\begin{abstract}
	In this paper, we study a remote monitoring system where a receiver observes a remote binary Markov source and decides whether to sample and transmit the state through a randomly delayed channel. We adopt uncertainty of information (UoI), defined as the entropy conditional on past observations at the receiver, as a metric of value of information, in contrast to the traditional state-agnostic nonlinear age of information (AoI) penalty functions. To address the limitations of prior UoI research that assumes one-time-slot delays, we extend our analysis to scenarios with random delays. We model the problem as a partially observable Markov decision process (POMDP) problem and simplify it to a semi-Markov decision process (SMDP) by introducing the belief state. We propose two algorithms: A globally optimal \textit{bisection relative value iteration} (bisec-RVI) algorithm and a computationally efficient sub-optimal index-based threshold algorithm to solve the long-term average UoI minimization problem. Numerical simulations demonstrate that our sampling policies surpass traditional zero wait and AoI-optimal policies, particularly under conditions of large delay, with the sub-optimal policy nearly matching the performance of the optimal one.
	\begin{comment}
		We thus use uncertainty of information (UoI) as the metric to characterize the performance of the system. Measured by Shannon’s entropy, UoI reflects how much the receiver does not know about the latest source's state in the absence of new information. UoI varies with the latest observation of the source's state, making it different from other AoI-related functions.
		The random delay for UoI optimization has never been considered before, because the current research for UoI idealizes the transmission delay as one time slot.
		Consequently, we are the first to formulate a UoI-minimization problem under random delay. Typically, such a problem which takes actions based on the imperfect observations can be modeled as a partially observed Markov decision process (POMDP). By introducing \emph{belief state}, we transform this process into a semi-Markov decision process (SMDP). To solve this problem, we provide an optimal sampling policy employing a two layered \emph{bisection relative value iteration (bisec-RVI)} algorithm.
		Furthermore, we propose a sub-optimal index policy with low complexity based on the special properties of belief state. Numerical simulations illustrate that both of the proposed sampling policies outperform two other benchmarks. Moreover, the performance of the sub-optimal policy approaches to that of the optimal policy, particularly under large delay.
	\end{comment}
\end{abstract}

\begin{IEEEkeywords}
	 Remote monitoring, uncertainty of information, age of information, Markov decision process
\end{IEEEkeywords}

\IEEEpeerreviewmaketitle

\section{Introduction} 
%Freshness of information holds significant importance in numerous modern information and control systems, especially those employed for remote monitoring, industrial automation, and internet-of-things (IoT) applications. %The main goal in these applications is to
%maintain an accurate knowledge at the receiver of the status of the sensed processes. 
%The main objective of these applications is to provide the receiver with up-to-date information about the states of the sensed processes.
To evaluate the information freshness, age of information (AoI) has been proposed in \cite{AoI2012}, \cite{aoi2011}, and has attracted extensive research attention in remote monitoring, industrial automation, and internet-of-things (IoT) applications \cite{aoisurvey2021,DBLP:journals/tcom/DoganA21,wu2022minimizing,DBLP:journals/tmc/PanCLL23,xie2020age,meng2022analysis,DBLP:journals/iotj/WangWJWWZ22,DBLP:conf/globecom/Long0GLN22,DBLP:journals/tcom/FengWFCD24}.
Traditionally, AoI is known to be \textit{state-agnostic} \cite{AoII2020,pappas2021,newmaetrics2024}, \emph{i.e.}, focusing solely on the timeliness of information without accounting for the dynamics and the semantics of the source. Today, some variants of AoI have been proposed in \cite{10409276, voi2018,10323428,voi2022,sun2019non,SF2022,whittleforaoi2019,aoipenal2019,aoiopt2017,9551200} to address the \textit{state-agnostic} limitation. The researchers successfully derive their metrics of interest, such as the mutual information and the mean square error between the source and the latest receives message, as non-linear penalty functions of AoI. In this way, the problem of minimizing mutual information, or mean square error, can be transformed into a problem of minimizing non-linearly penalized AoI.

Uncertainty of information (UoI) is a new metric that addresses the \textit{state-agnostic} limitation of AoI \cite{UoI2022}. Defined as the entropy conditional on past observations at the receiver, UoI quantifies the receiver’s uncertainty about the latest state of the source based on previously received, potentially outdated observations. Unlike AoI, UoI integrates both the age and the historical observations, making it a \textit{state-aware} metric. For example, consider a binary discrete-time Markov process with state transition probabilities $P[0|1] = 0.2$ and $P[1|0] = 0.05$. As illustrated in Fig. \ref{figure11}, UoI's dependence on AoI varies with the latest received state $S_0$, exhibiting a non-monotonic relationship. This phenomenon introduces new challenges in designing UoI-optimal sampling policies.
\begin{figure}
	\centering
	\includegraphics[width=	0.8\linewidth]{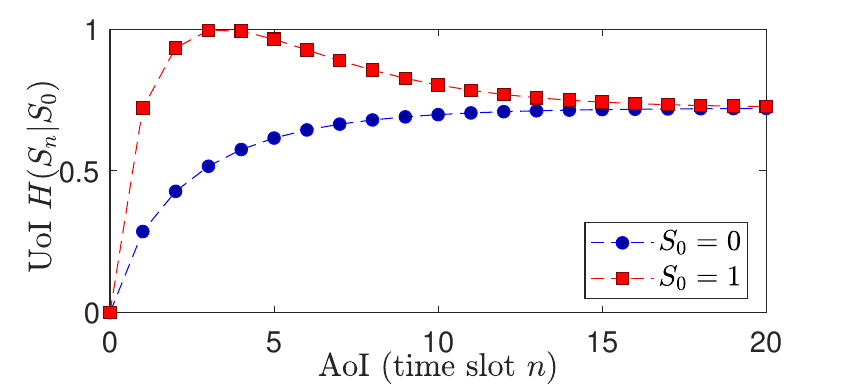}
	\caption{UoI vs. AoI and the latest observed state $S_0$.}
	\label{figure11} \vspace{-4mm}
\end{figure}

In \cite{UoI2022} and \cite{UoI2023}, the UoI-optimal sampling and scheduling policy has been investigated in the one-time-slot system. In \cite{UoI2022}, a Whittle index-based multi-source scheduling policy for binary Markov process is derived. Then an index policy for general finite-state Markov processes under unreliable channels are further extended in \cite{UoI2023}. Both of the studies idealize the transmission delay as one time slot. However, random and large delays are common in communication networks due to network loading, routing, and retransmission\cite{reasondelay2011}. Under random delay, UoI-optimal sampling process is no longer a typical Markov decision process (MDP), which brings about new challange. This prompts us to formulate a new problem for UoI-minimization under random delay. 

%despite the setting where the Markov source and the receiver are remote from each other. 
Up to this point, considerable research efforts have been devoted to optimizing metrics related to AoI in the presence of delays \cite{delay2020,twodelay2022,delay2022,twodelay2023,delay2024,tangAgeOptimalSampling2023}. Authors proposes a new “selection-from-buffer” model for sending the features aimed at minimizing the general functions of AoI (monotonic or non-monotonic) with random transmission delay in \cite{delay2022}. As an extension to \cite{delay2022}, authors minimizes the general functions of AoI through a channel with highly variable two-way random delay in \cite{delay2024}. And in \cite{tangAgeOptimalSampling2023}, an optimal sampling policy is designed to minimize the average AoI when the statistics of delay are unknown. However, all of these AoI-related functions to be minimized are state-agnostic, distinct to minimization of UoI, which is a state-aware metric.
%The primary distinction between UoI and the AoI-related functions \cite{SF2022,whittleforaoi2019,aoipenal2019,aoiopt2017} lies in the fact that UoI not only exhibits a non-monotonic relationship with AoI but also varies based on the most recent observation of the source's state. 

To sum up, our motivation is twofold: ($i$) The optimization of UoI under random delay still remains largely unexplored. ($ii$) The UoI-minimization problem is distinct from other AoI-related optimization under random delay.
In this paper, we study a remote monitoring system where a receiver observes a remote binary Markov source and decides whether to sample the source's state over a randomly delayed channel. Our goal is to solve a UoI-minimization problem under random delay. We model the problem as a partially observed Markov decision process (POMDP), and simplify it into a semi-Markov decision process (SMDP) by introducing \emph{belief state}. Specifically, the contributions of this paper are as follows:
\begin{itemize}
	\item We formulate a new problem to minimize the average UoI under random delay. %Limited as we know, our work is the first to minimize the average UoI under random delay. 
	\item We propose an optimal policy for this work. An optimal sampling policy is provided by applying a two layered \emph{bisection relative value iteration (bisec-RVI)} algorithm. 
	\item We develop a sub-optimal policy for this work. A sub-optimal index policy with computation efficiency is proposed based on the special properties of belief state.
\end{itemize}
Numerical simulations illustrate that our proposed sampling policies outperform traditional Zero wait and AoI-optimal policies. And the performance of the sub-optimal policy nearly match that of the optimal policy, especially under large delay.
%In this paper, we address the minimum-UoI sampling problem with random delay. Initially, we present an optimal policy utilizing a two-layered bisection relative value iteration (RVI) algorithm. Additionally, we propose a sub-optimal policy, namely a low-complexity index-based threshold policy, which assumes a long transmission delay time. This sub-optimal policy is designed as an alternative method to tackle the original problem.
%In particular, we consider a system in which a sampler, under the control of a receiver, samples a binary Markov source and transmits the state updates of the source to the receiver through a channel with delay. Our simulation results demonstrate that the low-complexity index-based policy performs well in contrast to the optimal objective value, especially when the delay time is long enough.
		
\section{System Model and Problem Formulation} \label{section II}
\subsection{System Model}
  \begin{figure}
			\centering
			\includegraphics[width=	1\linewidth]{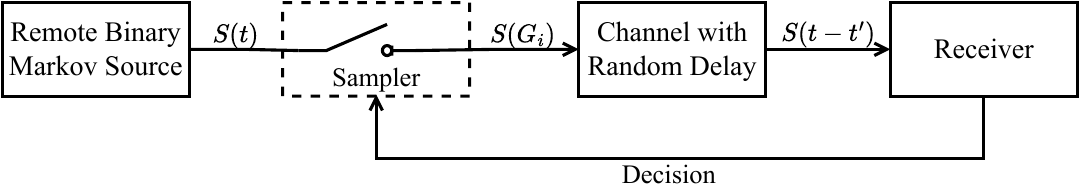}
			\caption{System model of the considered remote monitoring system.}
			\label{figure1} \vspace{-4mm}
		\end{figure}
We consider a discrete-time remote monitoring system in Fig. \ref{figure1}, where the states of a remote binary Markov source are delivered through a channel to a receiver. Based on the history observation of the source's state, the receiver decides whether to sample the current state or not, and transmits the decision to the sampler. %When the sampler receives the decision from the receiver, it samples and sends the current state of the source to the receiver. 
The remote source is a discrete-time binary Markov process with a state at time $t$ denoted by $S(t) \in \{0,1\}$. We assume that the one-step transition matrix of the Markov process $\mathbf{P}$ is known to the receiver, given by
\begin{equation} \label{eq2-1}
  \begin{array}{cc}
  & \begin{matrix} \text{State 0} & \text{State 1} \end{matrix} \\
\begin{matrix} \text{State 0} \\ \text{State 1} \end{matrix} &
\begin{bmatrix}
  1-p & p\\
q & 1-q
\end{bmatrix}
\end{array}\end{equation}
		where $0 < p \leq q < 1$, $p+q \neq 1$ \footnote{Without loss of generality, we assume that $p \leq q$. And when $p+q = 1$, UoI is constant that can not be changed by any sampling policy.  For a similar reason, we also assume that $p+q \neq 0$ and $p+q \neq 2$.}.
		Consequently, the $n$-step transition matrix of this Markov process is \cite[Appendix A]{puterman2014markov}
		\begin{equation} \label{eq2-2}
			\mathbf{P}^n=
			\begin{bmatrix}
				1-p^{(n)} & p^{(n)}\\
				q^{(n)} & 1-q^{(n)}
			\end{bmatrix},
		\end{equation}
		where
		$p^{(n)} \triangleq \frac{p-p(1-p-q)^n}{p+q}$, $q^{(n)} \triangleq \frac{q-q(1-p-q)^n}{p+q}$, $n \in \mathbb{N^+}$.
		
		The $i$-th state information sent over the channel is sampled and transmitted in time slot $G_i$, and is delivered at the receiver at time slot $R_i = G_i +Y_i$, where $Y_i \geq 1$ is the \textit{independent and identically distributed} (i.i.d) random transmission delay of the $i$-th state information, satisfying $1 \leq E[Y_i] < \infty$. We assume that the sampler receives delay-free decision commands from the receiver \footnote{A remote monitoring system with two-way random delay can be extended based on this work.}
		, such that the receiver sends the decision and the sampler receives the decision at time slot $G_i$. Let $Z_i = G_i - R_{i-1} \geq 0$ represent the waiting time for sending the $i$-th packet after the ($i-1$)-th packet is received by the receiver.  
		
		%The transmission sampler decides whether to send the current state information of source or not. We assume that the sampler sends information only when the channel is idle, \emph{i.e.}, $G_{i+1} \geq R_i$, or equivalently,
		
		\subsection{Problem Formmulation}
		The UoI is a metric proposed in \cite{UoI2022} to measure the uncertainty of the source at the receiver side given the history observations, given by
  \begin{equation}
      U(t)=H(S(t)|\mathbf{W}(t)),
  \end{equation}
  where $\mathbf{W}(t)=(S(G_0),S(G_1),\cdots,S(t'))$ are the history observations at the receiver side up to time slot $t$, and $t' \triangleq S(\mathop{\max}_{i} \{G_i:R_i \leq t\})$ is the time stamp of the most recently received update. Leveraging the Markov property of $S(t)$, we have 
		\begin{equation} \label{eq2-3}
  \begin{aligned}
      U(t) &\triangleq -\sum_{i\in\{0,1\}} \mathrm{P}[S(t)=i|\mathbf{W}(t)] \log_2 \mathrm{P}[S(t)=i|\mathbf{W}(t)]\\
      &=-\sum_{i\in\{0,1\}} \mathrm{P}[S(t)=i|S(t')] \log_2 \mathrm{P}[S(t)=i|S(t')],
  \end{aligned}
		\end{equation}
		From the definition of $t'$, we can rewrite $U(t)$ as a piecewise function:
		\begin{equation} \label{eq2-4}
			\begin{aligned}
				U(t) = -\sum_{i\in\{0,1\}} P[S(t)=i|S(G_i)] \log_2 P[S(t)=i|S(G_i)],\\
				\qquad \text{if}\,\, R_i \leq t < R_{i+1},\, \forall {i \in \mathbb{N}}.
			\end{aligned}
		\end{equation}
For short-hand notations, we introduce $H(p)\triangleq -p \log_2(p) - (1-p) \log_2 (1-p)$ as the entropy of a binary information. With \eqref{eq2-2} in hand, \eqref{eq2-4} can be rewritten as

\begin{align} U(t)=\left\{\begin{aligned} \label{eq2-6}
	&H(p^{(t-G_i)}), \,\, \text{if}\, S(G_i)=0, R_i \leq t < R_{i+1},\forall {i \in \mathbb{N}}\\
	&H(q^{(t-G_i)}), \,\, \text{if}\, S(G_i)=1, R_i \leq t < R_{i+1}, \forall {i \in \mathbb{N}},
\end{aligned}\right.\end{align}
where $t-G_i$ for $R_i \leq t < R_{i+1},\forall {i \in \mathbb{N}}$ is what exactly AoI indicates. As a result, UoI is an observation-based non-monotonic function of AoI.
		
A sampling policy is a sequence $\pi = \left(G_1,G_2,\cdots\right) \in \Pi$, which is time stamps of sampling times for each packet. Alternatively, the sampling policy can also be expressed as $\pi = \left(Z_1,Z_2,\cdots\right) \in \Pi$, which is the sequence of waiting times for each packet. (see \ref{A} for a detailed explanation.) Our goal is to find both an optimal sampling policy that minimizes the time-average expected sum-UoI:
\begin{equation} \label{eq2-7}
	\mathop{\inf}_{\pi \in \Pi} \mathop{\limsup}_{T \rightarrow \infty} \frac{1}{T} \mathbb{E} \left[\sum_{t=0}^{T-1}U(t)\right].
\end{equation}
		
This is a new problem different from previous studies focused on optimizing cost functions of AoI with random delay. The uniqueness resides in the fact that UoI is not only dependent on the age of the latest observation (\emph{i.e.} $t-G_i$), but also customized by the latest observation $S(G_i)$, as shown in \eqref{eq2-6}. Compared to UoI, the current functions of AoI used as metrics to design sampling policy under random delay, as we know, have nothing to do with the contents of the transmission information and remains invariant \cite{delay2022,twodelay2023,delay2024}.

		%Consequently, we use \textit{belief state} to solve this problem in the following discuss.
		
\section{Optimal Sampling policy}\label{section III}	
In this section, we propose an optimal sampling policy by using \emph{bisec-RVI} algorithm to minimize the long-term average expected sum-UoI.
\subsection{Belief State}
In our system model, the receiver is tasked with determining sampling actions based on delayed and imperfect observations $\mathbf{W}(t)$ to minimize the time-average UoI. This problem is commonly modeled as a POMDP. A fundamental approach in solving partially observable Markov decision processes (POMDPs) involves transforming them into Markov Decision Processes (MDPs) by utilizing a concept known as \textit{belief state}. In this subsection, we explore the \textit{belief state} in our context, which is as the probability of $S(t)=1$ given the observations $\mathbf{W}(t)$, given by
\begin{equation}
    \Omega(t) = \mathrm{P} \left[ S(t)=1|\mathbf{W}(t)\right].
\end{equation}
Similar to the process to obtain \eqref{eq2-6}, we can prove that for $R_i \leq t < R_{i+1}$, $\forall i \in \mathbb{N}$, $\Omega (t)$ can be expressed by
\begin{align} \Omega(t)=\left\{\begin{aligned} \label{eq3-2}
		&p^{(t-G_i)}, \quad \text{if}\,\, S(G_i)=0\\
		&1-q^{(t-G_i)}, \quad \text{if}\,\, S(G_i)=1.
	\end{aligned}\right.\end{align}
The evolution of the \textit{belief state} is given in the following lemma:
\begin{lemma} \label{lemmabs}
    Given $\Omega(t)=\omega$, $\Omega(t+k)$ can be explicitly calculated by
    \begin{equation} \label{eq3-3}
	\Omega(t+k) = \frac{p-p(1-p-q)^k}{p+q} + \omega(1-p-q)^k,
\end{equation}
where $\omega \in \{ p^{(n)},\, 1-q^{(n)} \}$, $n,k \in \mathbb{N}$. For short-hand notations, we leverage $\tau^k(\omega)$ to denote the right-hand side of \eqref{eq3-3}. 
\end{lemma}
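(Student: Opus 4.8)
The plan is to evaluate the one-sided belief $\Omega(t+k)=\mathrm{P}[S(t+k)=1\mid\mathbf{W}(t)]$ directly, under the implicit assumption underlying the lemma that no new observation is delivered strictly between $t$ and $t+k$, so that the conditioning history remains $\mathbf{W}(t)$ throughout the propagation. First I would apply the law of total probability by conditioning on the (unobserved) present state $S(t)$:
\begin{equation}
\Omega(t+k)=\sum_{j\in\{0,1\}}\mathrm{P}[S(t+k)=1\mid S(t)=j,\mathbf{W}(t)]\,\mathrm{P}[S(t)=j\mid\mathbf{W}(t)].
\end{equation}

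The key structural step is to invoke the Markov property of $S(t)$. Because every entry of $\mathbf{W}(t)$ is an observation taken at a time slot no later than $t$, conditioning on the present state $S(t)$ renders the future independent of that past history, i.e. $\mathrm{P}[S(t+k)=1\mid S(t)=j,\mathbf{W}(t)]=\mathrm{P}[S(t+k)=1\mid S(t)=j]$. These transition probabilities are exactly the relevant entries of the $k$-step matrix $\mathbf{P}^k$ in \eqref{eq2-2}, namely $\mathrm{P}[S(t+k)=1\mid S(t)=0]=p^{(k)}$ and $\mathrm{P}[S(t+k)=1\mid S(t)=1]=1-q^{(k)}$. Substituting these, together with $\mathrm{P}[S(t)=1\mid\mathbf{W}(t)]=\omega$ and $\mathrm{P}[S(t)=0\mid\mathbf{W}(t)]=1-\omega$, gives
\begin{equation}
\Omega(t+k)=p^{(k)}(1-\omega)+(1-q^{(k)})\omega=p^{(k)}+\omega\bigl(1-p^{(k)}-q^{(k)}\bigr).
\end{equation}

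What then remains is a short algebraic simplification: summing the explicit expressions for $p^{(k)}$ and $q^{(k)}$ from \eqref{eq2-2} yields $p^{(k)}+q^{(k)}=1-(1-p-q)^k$, hence $1-p^{(k)}-q^{(k)}=(1-p-q)^k$, and inserting $p^{(k)}=\frac{p-p(1-p-q)^k}{p+q}$ reproduces the claimed $\tau^k(\omega)$ in \eqref{eq3-3}. I expect the Markov-property justification of the middle step to be the only real crux, since it is what legitimately collapses the dependence on the full history $\mathbf{W}(t)$ down to the single scalar $\omega$; the total-probability decomposition and the subsequent substitutions are routine. The hypothesis $\omega\in\{p^{(n)},1-q^{(n)}\}$ plays no active role in the derivation and merely records that the starting belief is itself an admissible belief-state value produced by \eqref{eq3-2}.
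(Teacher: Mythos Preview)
Your proposal is correct and follows essentially the same approach as the paper's proof: both obtain $\Omega(t+k)=p^{(k)}(1-\omega)+(1-q^{(k)})\omega$ and then simplify via $1-p^{(k)}-q^{(k)}=(1-p-q)^k$. The only difference is cosmetic: the paper simply asserts the first line from the $k$-step transition matrix, whereas you spell out the law-of-total-probability decomposition and the Markov-property step that justify it.
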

\begin{proof}
    Please refer to \ref{AA}.%\cite[Appendix A]{chen2024optimal}.
    \begin{comment}
    	 With the transition matrix given in \eqref{eq2-2}, the evolution of the belief state can be written as
    	\begin{equation}
    		\begin{aligned}
    			\Omega(t+k) &= \Omega(t) \cdot (1-q^{(k)}) + (1-\Omega(t)) \cdot p^{(k)} \\
    			&= p^{(k)} + \Omega(t)(1-p^{(k)}-q^{(k)}) \\
    			&= \frac{p-p(1-p-q)^k}{p+q} + \omega(1-p-q)^k.
    		\end{aligned}
    	\end{equation}
    	Here we obtain \eqref{eq3-3} and finish the proof.
    \end{comment}
\end{proof}
\begin{corollary}
    The \textit{equilibrium belief state} of $\Omega(t)$ is
    \begin{equation} \label{eq3-4}
	\omega^*\triangleq\mathop{\lim}_{k \rightarrow \infty} \tau^k(\omega) = \frac{p}{p+q}.
\end{equation}
\end{corollary}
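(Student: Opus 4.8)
The plan is to compute the limit directly from the closed-form expression for $\tau^k(\omega)$ established in Lemma \ref{lemmabs}. Recall that
\begin{equation}
\tau^k(\omega) = \frac{p-p(1-p-q)^k}{p+q} + \omega(1-p-q)^k.
\end{equation}
Since $\omega$ is held fixed, the only dependence on $k$ enters through the geometric factor $(1-p-q)^k$, so the entire computation reduces to analyzing the behavior of this term as $k \to \infty$.

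First I would establish the key bound $|1-p-q| < 1$. The standing assumption $0 < p \leq q < 1$ immediately gives $0 < p+q < 2$, equivalently $-1 < 1-p-q < 1$, so that $|1-p-q| < 1$; this is precisely what the footnote conditions $p+q \neq 0$ and $p+q \neq 2$ rule out in the degenerate boundary cases. Consequently $\lim_{k\to\infty}(1-p-q)^k = 0$.

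With this bound in hand, I would take termwise limits. Because $\omega$ is a constant independent of $k$, the summand $\omega(1-p-q)^k$ vanishes, and the fraction collapses to $\frac{p-p\cdot 0}{p+q}=\frac{p}{p+q}$. Adding the two limits yields
\begin{equation}
\omega^* = \lim_{k\to\infty}\tau^k(\omega) = \frac{p}{p+q},
\end{equation}
which is the claimed identity.

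The main obstacle, such as it is, lies entirely in justifying the contraction bound $|1-p-q|<1$; once the parameter constraints are invoked to exclude the boundary values $p+q\in\{0,2\}$, the remainder is a routine vanishing-geometric-term argument that is independent of the initial value $\omega$. I would also note in passing that the resulting equilibrium $\frac{p}{p+q}$ coincides with the stationary probability $\pi_1$ of the underlying two-state Markov chain, which serves as a useful sanity check and explains why the limit does not depend on the starting belief.
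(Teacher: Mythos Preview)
Your proposal is correct and follows exactly the same approach as the paper: the paper's proof simply observes that $0<|1-p-q|<1$ implies $\lim_{k\to\infty}(1-p-q)^k=0$, from which the limit follows. Your version is just a more fleshed-out rendition of this one-line argument.
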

\begin{proof}
    Since $0<|1-p-q|<1$, we have that $\mathop{\lim}_{k \rightarrow \infty} (1-p+q)^k=0$, and thus we obtain the limit.
\end{proof}

Since $H(p)=H(1-p)$, by combing \eqref{eq2-6} and \eqref{eq3-2} we have
\begin{equation}
    U(t)=H(\Omega(t)).
\end{equation}
Then the problem \eqref{eq2-7} can be rewritten as:
\vspace{-1mm}
\begin{equation} \label{eq3-5}
	\overline{p}_{\mathrm{opt}} = \mathop{\inf}_{\pi \in \Pi} \mathop{\limsup}_{T \rightarrow \infty} \frac{1}{T} \mathbb{E} \left[\sum_{t=0}^{T-1} H(\Omega(t)) \right],
\end{equation}
where $\overline{p}_{\mathrm{opt}}$ is the optimum value of \eqref{eq2-7}. 
\subsection{An Optimal Sampling Policy}
We present an optimal policy for \eqref{eq3-5} as follows:
\begin{theorem} \label{theorem1}
If $Y_i$'s are $i.i.d.$ with a finite mean $\mathbb{E}[Y_i]$, given $\Omega(R_i) = \omega$, then $\pi^{*} = (Z_1^{*}, Z_2^{*},\cdots)$ is an optimal solution to \eqref{eq3-5}, which satisfies the following optimality equation:
\begin{equation} \label{eqtheo1-1}
	\begin{aligned}
		&g(\bar{p}_{\mathrm{opt}}) + \widetilde{V}(\omega,\bar{p}_{\mathrm{opt}}) \\
		&= \mathop{\inf}_{Z_{i+1} \in \mathbb{N}}  \left\{c(\omega,Z_{i+1},\bar{p}_{\mathrm{opt}}) + r(\omega,Z_{i+1},\bar{p}_{\mathrm{opt}})\right\},
	\end{aligned}
\end{equation}
\begin{equation} \label{eqtheo1-2}
	g(\bar{p}_{\mathrm{opt}}) = \mathop{\inf}_{Z_{i+1} \in \mathbb{N}}  \{c(p,Z_{i+1},\bar{p}_{\mathrm{opt}}) + r(p,Z_{i+1},\bar{p}_{\mathrm{opt}})\},
\end{equation}
where 
\begin{equation}
	c(\omega,Z_{i+1},\bar{p}_{\mathrm{opt}}) =  \mathbb{E}\left[\sum_{k=0}^{Z_{i+1} +Y_{i+1}-1} (H(\tau^{k}(\omega))-\bar{p}_{\mathrm{opt}})\right],
\end{equation}
\begin{equation}
	\begin{aligned}
		r(\omega,Z_{i+1},\bar{p}_{\mathrm{opt}})=&\mathbb{E}\Bigg[\tau^{Z_{i+1}+Y_{i+1}}(\omega) \widetilde{V}(1-q^{(Y_{i+1})},\bar{p}_{\mathrm{opt}})\\
		&+(1-\tau^{Z_{i+1}+Y_{i+1}}(\omega))\widetilde{V}(p^{(Y_{i+1})},\bar{p}_{\mathrm{opt}})\Bigg] ,
	\end{aligned}
\end{equation}
for all $\omega \in \{ p^{(n)},\, 1-q^{(n)} \}$, $n \in \mathbb{N}$.
\end{theorem}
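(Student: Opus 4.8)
The plan is to recognize \eqref{eqtheo1-1}--\eqref{eqtheo1-2} as the average-cost optimality (Bellman) equation of a \emph{semi-Markov decision process} (SMDP) whose embedded chain is observed only at the reception instants $R_i$, and then to invoke a verification argument. First I would formalize the SMDP: the decision epochs are the reception times $R_i$; the state at epoch $i$ is the belief $\Omega(R_i)=\omega$, which by Lemma~\ref{lemmabs} together with \eqref{eq3-2} is a sufficient statistic for the history $\mathbf{W}(R_i)$ and therefore renders the embedded process Markov, justifying the POMDP-to-SMDP reduction; the action is the waiting time $Z_{i+1}\in\mathbb{N}$; and the (random) sojourn time to the next epoch is $Z_{i+1}+Y_{i+1}$. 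Using Lemma~\ref{lemmabs} the belief evolves deterministically as $\Omega(R_i+k)=\tau^k(\omega)$ within an interepoch interval, so the UoI accrued over $[R_i,R_{i+1})$ is $\sum_{k=0}^{Z_{i+1}+Y_{i+1}-1}H(\tau^k(\omega))$; at the next reception the fresh sample and its age $Y_{i+1}$ move the state to $1-q^{(Y_{i+1})}$ or $p^{(Y_{i+1})}$ according to \eqref{eq3-2}, with the branch probabilities determined by the propagated belief. This identifies $c(\omega,Z_{i+1},\bar{p}_{\mathrm{opt}})$ as the expected one-stage cost with the average rate $\bar{p}_{\mathrm{opt}}$ subtracted per slot, and $r(\omega,Z_{i+1},\bar{p}_{\mathrm{opt}})$ as the expected continuation (relative) value.

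Second, I would write the standard average-cost SMDP optimality equation $\widetilde{V}(\omega)=\inf_{Z}\{\bar{c}(\omega,Z)-\bar{p}_{\mathrm{opt}}\,\mathbb{E}[Z+Y]+\mathbb{E}[\widetilde{V}(\Omega')]\}$ and absorb the $-\bar{p}_{\mathrm{opt}}\mathbb{E}[Z+Y]$ term into the running cost, which is precisely the form $\inf_Z\{c+r\}$ on the right-hand side of \eqref{eqtheo1-1}. The constant $g(\bar{p}_{\mathrm{opt}})$ plays the role of a normalization: taking the reference state $\omega=p=p^{(1)}$ and comparing \eqref{eqtheo1-1} with \eqref{eqtheo1-2} forces $\widetilde{V}(p,\bar{p}_{\mathrm{opt}})=0$, which pins down the additive degree of freedom of the relative value function. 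A verification argument then finishes the proof: if a scalar $\bar{p}_{\mathrm{opt}}$ and a bounded function $\widetilde{V}$ solve \eqref{eqtheo1-1}--\eqref{eqtheo1-2}, summing the per-stage relative-cost identity along any admissible policy and dividing by the accumulated time shows that $\bar{p}_{\mathrm{opt}}$ lower-bounds the long-run average UoI of every $\pi\in\Pi$, with equality attained by the stationary policy $\pi^{*}$ that selects the minimizing $Z_{i+1}$ in \eqref{eqtheo1-1}; hence $\pi^{*}$ is optimal for \eqref{eq3-5}.

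The hard part will be establishing existence of a well-behaved solution $(\bar{p}_{\mathrm{opt}},\widetilde{V})$ to the optimality equation on the \emph{countably infinite} state space $\{p^{(m)},1-q^{(m)}:m\ge 1\}$, since average-cost optimality equations need not admit bounded solutions in general. I would secure this via the vanishing-discount method: solve the discounted-cost SMDP (whose value exists because the per-slot cost is bounded, $0\le H(\Omega(t))\le 1$, and $1\le\mathbb{E}[Y_i]<\infty$), and let the discount factor tend to $1$. The key estimates are (i) uniform boundedness of the relative discounted value differences, which follows from the contraction $|\tau^k(\omega_1)-\tau^k(\omega_2)|=|1-p-q|^{k}\,|\omega_1-\omega_2|$ of the belief map toward the equilibrium $\omega^{*}=p/(p+q)$, so that any two states couple geometrically and the relative value stays uniformly bounded; and (ii) the compactness needed to pass to the limit and recover \eqref{eqtheo1-1}. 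The finite mean delay together with the renewal structure furnished by the reference state $\omega=p$ guarantee that the SMDP is unichain with finite expected recurrence time, which is exactly the regularity condition under which the vanishing-discount limit yields the stated optimality equation and the optimality of $\pi^{*}$.
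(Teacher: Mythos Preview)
Your SMDP formulation and the identification of $c$ and $r$ match the paper exactly, but the route you take to the optimality equation is genuinely different from the paper's. The paper does not use the vanishing-discount method or a verification argument. Instead, after writing the Bellman equation \eqref{eqtheorem1} of the SMDP, it applies Dinkelbach's fractional-programming trick: it rewrites the time-average objective as a ratio of expected per-epoch cost to expected epoch length, introduces the parametric family
\[
g(\beta)=\inf_{\pi}\lim_{n\to\infty}\frac{1}{n}\sum_{i=0}^{n-1}\mathbb{E}\Bigl[\sum_{k=0}^{Z_{i+1}+Y_{i+1}-1}\bigl(H(\tau^k(\omega))-\beta\bigr)\Bigr],
\]
observes that for each fixed $\beta$ this is an ordinary average-cost \emph{MDP} (per epoch, not per slot), and then invokes a separate lemma (proved via Dinkelbach's argument) that $\bar{p}_{\mathrm{opt}}$ is the unique root of $g(\beta)=0$ and that the minimizer of $g(\bar{p}_{\mathrm{opt}})$ is optimal for the original SMDP. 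Equations \eqref{eqtheo1-1}--\eqref{eqtheo1-2} are then literally the RVI optimality equations of that per-epoch MDP, anchored at the reference state $p$; in particular $g(\bar{p}_{\mathrm{opt}})$ in the statement is not merely a normalization constant but the optimal per-epoch cost of the $\beta$-shifted problem at $\beta=\bar{p}_{\mathrm{opt}}$, which the lemma forces to be zero. Existence is handled not by vanishing discount but by \emph{assuming} that $Y_i$ takes finitely many values, so the belief state space $\{p^{(m)},1-q^{(m)}\}$ visited at reception instants is finite and standard finite-state unichain MDP theory applies.

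What each approach buys: the paper's Dinkelbach route is constructive and directly motivates the two-layered bisection--RVI algorithm (outer bisection on $\beta$, inner RVI for $g(\beta)$), but it quietly relies on a finite-support assumption on $Y_i$ to avoid the countable-state issues you worry about. Your vanishing-discount plan, with the geometric coupling coming from the contraction $|\tau^k(\omega_1)-\tau^k(\omega_2)|=|1-p-q|^k|\omega_1-\omega_2|$, is the more robust way to establish existence of $(\bar{p}_{\mathrm{opt}},\widetilde{V})$ on the full countable state space without that extra assumption, at the cost of being less algorithmically suggestive.
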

\begin{proof}[Proof sketch]
    The problem \eqref{eq3-5} can be cast as an infinite-horizon average cost SMDP \cite[Chapter 11]{puterman2014markov}. Recall that $Z_{i+1} = G_{i+1} - R_i$ as the waiting time for sending the $(i + 1)$-th packet after the $i$-th packet is received by the receiver. Given $\Omega(R_i) = \omega$, the Bellman optimality equation of the average cost problem is 
    \begin{equation} \label{eqtheorem1}
    	\begin{aligned}
    		V^{*}(\omega) = &\mathop{\inf}_{Z_{i+1} \in \mathbb{N}} \Biggl\{ \mathbb{E}\left[\sum_{k=0}^{Z_{i+1} +Y_{i+1}-1} (H(\tau^{k}(\omega))-\bar{p}_{\mathrm{opt}})\right] \\
    		&+  \mathbb{E}\Bigg[\tau^{Z_{i+1}+Y_{i+1}}(\omega)V^{*}(1-q^{(Y_{i+1})})\\
    		&+(1-\tau^{Z_{i+1}+Y_{i+1}}(\omega))V^{*}(p^{(Y_{i+1})})\Bigg] \Biggr\},
    	\end{aligned}
    \end{equation}
    for all $\omega \in \{ p^{(n)},\, 1-q^{(n)} \}$, $n \in \mathbb{N}$, where $V^*(\omega)$ is the relative value function associated with the average cost problem \eqref{eq3-5}. 
    
    We assume that $Y_{i}$'s are random variables with limited values, thus the states of $\omega$ are finite and countable. The equation \eqref{eqtheorem1} can be converted to \eqref{eqtheo1-1} and \eqref{eqtheo1-2}. This inversion can be interpreted as a transformation from the SMDP to an equivalent MDP \cite[Chapter 11]{puterman2014markov}.
    Please refer to Appendix \ref{A} for details of the proof.
\end{proof}

Applying Dinkelbach’s method for nonlinear fractional programming as shown in \cite{dinkelbach1967nonlinear} and \cite[lemma 2]{DBLP:journals/tit/SunPU20}, we get two assertions: ($i$) $\bar{p}_{\mathrm{opt}} \lesseqgtr\beta$ if and only if $g(\beta) \lesseqgtr 0$. ($ii$) $\bar{p}_{\mathrm{opt}}$ is the unique root of $g(\beta)=0$. Consequently, we can find the value of $\bar{p}_{\mathrm{opt}}$ by finding the root of $g(\beta)=0$ as shown in Algorithm \ref{Algorithm 1}, which is a two-layered \textit{Bisec-RVI} algorithm. Bisection search method is applied to the outer layer to get a fixed $\beta$ for every step and finally get the optimal value $\bar{p}_{\mathrm{opt}}$. In the inner layer, as the value of $\beta$ has been fixed by the outer layer, we only need to use RVI to find convergent $g(\beta)$. The RVI algorithm here is the same as it in MDP. The details about RVI algorithm have been neglected since it has been a mature technique to solve an infinite-horizon MDP \cite[Section 8.5.5]{puterman2014markov}. Similar algorithms are proposed in \cite{aoiopt2017,Sun2019nlage} and \cite{Tang2022wiener} to achieve age-optimal or mean square error (MSE)-optimal sampling.
\vspace{-2mm}
\begin{algorithm}
	\caption{Bisec-RVI algorithm}
	\label{Algorithm 1}
	\LinesNumbered
	\KwIn{$l=0$, $u=1$, tolerance $\epsilon >0$}
	\While {$u-l\ge \epsilon$}
	{
		$\beta:=(l+u)/2$\;
		Run RVI to solve $g(\beta)$ and $\widetilde{V}(\omega,\beta)$\;
		\If{$g(\beta)>0$}
		{
			$l := \beta$\;
		}
		\Else
		{$u:=\beta$\;}			
	}
	\KwOut{$\bar{p}_{\mathrm{opt}}=\beta$}
\end{algorithm}
\vspace{-3mm}
%The equation \eqref{eqtheorem1} can be solved by using the bisection \textit{relative value iteration} (RVI) algorithm presented in Algrithm \ref{Algorithm 1}, where we . The details about RVI algorithm have been neglected since it has been a mature technique to solve an infinite-horizon MDP. (See for details.)
%    begin{algorithm}
%	\caption{Bisection RVI algorithm}
%	\label{Algorithm 1}
%	\LinesNumbered
%	\KwIn{$l=0$, $u=1$, tolerance $\epsilon >0$}
%	\While {$u-l\ge \epsilon$}
%	{
%		$\beta:=(l+u)/2$\;
%		Run RVI to solve \eqref{eqtheorem1} and calculate $g(\beta)$
%		\If{$g(\beta)>0$}
%		{
%			$l := \beta$\;
%		}
%		\Else
%		{$u:=\beta$\;}			
%	}
%	\KwOut{$\bar{p}_{\mathrm{nopt}}=\beta$}
%\end{algorithm}

\section{Sub-optimal Index-based Policy}
The optimal policy for \eqref{eq3-5} uses \emph{bisec-RVI} algrithm, which requires repeatedly executing the RVI algorithm
in the inner layer until the value to be found converges, resulting in high computing complexity. %For example, for each state $\omega$, if the expected number to compute \eqref{eqtheorem1} one time is $M$, and the maximum number of iterations is $k$, then the total computing number for each state is $kM$, which is $k$ times as much as the computing number of an index-based policy.

In this section, we explore a sub-optimal but computation-efficient index-based policy in the sequel. We assume that the transmission process spends a long time, \emph{i.e.}, the value of $\mathbb{E}[Y_i]$, $\forall i \in \mathbb{N}$ is large enough. The large transmission delay may be due to long distance, time-varying channel conditions, too many packets in the channel and so on. According to \eqref{eq3-4}, we can conclude that the transition probability from receiving time $R_i$ to $R_{i+1}$ ($i \in \mathbb{N}$) is
\begin{equation} \label{eq3-6}
	\mathrm{P} [\Omega(R_{i+1}) = 1-q^{(Y_{i+1})} | \Omega(R_i) ] = \tau^{Z_{i+1}+Y_{i+1}}(\omega) = \omega^{*}.
\end{equation}
%The assumption converts the original problem into a new one, which can be interpreted in two ways:
%\begin{itemize}
%	\item[1.] The transsmission process spends a long time, \emph{i.e.}, the value of $\mathbb{E}[Y_i]$ is large. The long transmission time may be due to long distance, time-varying channel conditions, congestion, too many packets in the channel and so on.
%	\item[2.] At each time slot, the source generates $1$ and $0$ in fixed probability that $P[S(1)] = \omega^{*}$ and $P[S(0)] = 1-\omega^{*}$, rather than Markov source. 
%\end{itemize}
As the state transition probabilities are constant, the second and third terms of the right-hand-side of \eqref{eqtheorem1} are irrelevant to the waiting time, making the Bellman optimality equation easier to solve. Along this line, we can finally get an index-based sampling policy.

We depict the details of the problem as follows. On the basis of the assumption, we consider a sampling policy $\psi = (Z_1,Z_2,\cdots) \in \Pi$ and try to optimize the following problem:

\begin{equation} \label{eq3-7}
	\overline{p}_{\mathrm{nopt}} = \mathop{\inf}_{\psi \in \Pi} \mathop{\limsup}_{T \rightarrow \infty} \frac{1}{T} \mathbb{E} \left[\sum_{t=0}^{T-1} H(\Omega(t)) \right],
\end{equation}

After that, we present an index-based sampling policy for the problem. We introduce an index function as
\begin{equation} \label{eq3-8}
	\eta (\omega) \triangleq \inf_{Z_{i} \in \mathbb{N}^{+}} \frac{1}{Z_i} \sum_{k=0}^{Z_i -1} \mathbb{E} \left[H(\tau^{k+Y_i}(\omega) )\right],
\end{equation}
where %$\omega$ is the belief state possibly existing at each time slot, 
$\omega \in \{ p^{(n)},\, 1-q^{(n)} \}$, $n \in \mathbb{N}$. 
		
\begin{theorem} \label{theorem2}
If $Y_i$'s are $i.i.d.$ with a finite mean $\mathbb{E}[Y_i]$, then $\psi = (Z_1(\beta_{\psi}), Z_2(\beta_\psi),\cdots)$ is an optimal solution to (15), where 
\begin{equation} \label{eq3-9}
	Z_{i+1}(\beta_{\psi}) = \min \{k \in \mathbb{N}:\eta(\Omega(t+k)) \geq \beta_{\psi},t \geq R_i(\beta_{\psi}) \},
\end{equation}
and $\beta_{\psi}$ is the unique root of
\begin{equation} \label{eq3-10}
	\mathbb{E} \left[\sum_{R_i(\beta_{\psi})}^{R_{i+1}(\beta_{\psi})-1} H(\Omega(t))\right] - \beta_{\psi} \mathbb{E} \left[ R_{i+1}(\beta_{\psi})-R_i(\beta_{\psi}) \right] = 0,
\end{equation}
where $R_i(\beta_{\psi}) = G_i(\beta_{\psi}) + Y_i$ is the receiving time of the $i$-th state information submitted to the channel, and $\Omega(t)$ is the belief state at time slot $t$. Moreover, $\beta_{\psi}$ is exactly the optimum value of (15), \emph{i.e.}, $\beta_{\psi} = \overline{p}_{\mathrm{nopt}}$.
\end{theorem}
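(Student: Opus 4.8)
The plan is to recognize problem \eqref{eq3-7} as an infinite-horizon average-cost renewal (semi-Markov) optimization and to solve it by the same Dinkelbach argument used for Theorem \ref{theorem1}, now exploiting the drastic simplification brought by \eqref{eq3-6}. First I would observe that, under the large-delay assumption, \eqref{eq3-6} makes the belief at the next receiving instant $\Omega(R_{i+1})$ --- which equals $1-q^{(Y_{i+1})}$ with probability $\omega^{*}$ and $p^{(Y_{i+1})}$ otherwise --- statistically independent of both the current belief $\Omega(R_i)=\omega$ and the chosen waiting time $Z_{i+1}$. Consequently, under any stationary (belief-dependent) policy the intervals $[R_i,R_{i+1})$ form i.i.d. renewal cycles, and by the renewal-reward theorem the objective in \eqref{eq3-7} equals the ratio $\mathbb{E}[C(\psi)]/\mathbb{E}[L(\psi)]$ of the expected per-cycle cost $C(\psi)=\sum_{t=R_i}^{R_{i+1}-1}H(\Omega(t))$ to the expected cycle length $L(\psi)=R_{i+1}-R_i$.

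Second, I would invoke Dinkelbach's method exactly as in the discussion following Theorem \ref{theorem1}: minimizing the ratio is equivalent to locating the unique root $\beta_\psi$ of $f(\beta):=\inf_\psi\{\mathbb{E}[C(\psi)]-\beta\,\mathbb{E}[L(\psi)]\}=0$, and this root is precisely $\overline{p}_{\mathrm{nopt}}$. This immediately yields \eqref{eq3-10} and the claim $\beta_\psi=\overline{p}_{\mathrm{nopt}}$, \emph{provided} the inner minimizer is the threshold policy \eqref{eq3-9}. So the crux is to show that, for each fixed $\beta$, the per-cycle problem is solved by \eqref{eq3-9}.

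Third, because the cycles are independent and the next renewal belief is action-free, the inner problem decouples across cycles and, for each starting belief $\omega$, reduces to minimizing $g_\beta(Z)=\mathbb{E}\bigl[\sum_{k=0}^{Z+Y-1}(H(\tau^{k}(\omega))-\beta)\bigr]$ over $Z\in\mathbb{N}$. A direct forward-difference computation gives $g_\beta(Z+1)-g_\beta(Z)=\mathbb{E}[H(\tau^{Z+Y}(\omega))]-\beta$, so $g_\beta(Z)=g_\beta(0)+\sum_{k=0}^{Z-1}a_k$ with $a_k:=\mathbb{E}[H(\tau^{k+Y}(\omega))]-\beta$. The smallest global minimizer of this partial sum is the smallest $Z^{*}$ from which no forward window is net-negative, i.e.\ $\sum_{j=0}^{Z'-1}a_{Z^{*}+j}\ge 0$ for all $Z'\ge 1$; dividing by $Z'$ and taking the infimum, this is exactly $\eta(\tau^{Z^{*}}(\omega))\ge\beta$ by the definition \eqref{eq3-8} of the index. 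Conversely, for every $k<Z^{*}$ one has $\eta(\tau^{k}(\omega))<\beta$, meaning a beneficial future window still exists, so waiting is optimal. Writing $\tau^{k}(\omega)=\Omega(t+k)$ with $t\ge R_i$ then recovers the stopping rule \eqref{eq3-9}. Finiteness of $Z^{*}$ follows from $\tau^{k}(\omega)\to\omega^{*}$ in \eqref{eq3-4}, whence $\eta(\tau^{k}(\omega))\to\eta(\omega^{*})=H(\omega^{*})\ge\overline{p}_{\mathrm{nopt}}=\beta_\psi$, so the threshold is eventually crossed.

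I expect the main obstacle to be the third step: verifying that the index condition $\eta(\tau^{Z}(\omega))\ge\beta$ characterizes the \emph{global} (not merely local) minimizer of $g_\beta$ when $H(\tau^{k}(\omega))$ is non-monotonic in $k$ --- the very non-monotonicity highlighted in Fig.~\ref{figure11}. A naive ``stop when the instantaneous marginal turns nonnegative'' rule can fail here, and the role of the infimum in \eqref{eq3-8} is precisely to forbid any beneficial future waiting window; making this exchange/forward-induction argument rigorous, together with establishing the monotonicity and continuity of $f(\beta)$ needed to guarantee a unique root in Step~2, is where the real work lies.
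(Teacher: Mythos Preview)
Your proposal is correct and follows essentially the same route as the paper's proof. The paper writes the SMDP Bellman equation \eqref{eq3-11}, observes that the value-function term is action-free under \eqref{eq3-6}, reduces to minimizing $\mathbb{E}\bigl[\sum_{k=0}^{Z+Y-1}(H(\tau^k(\omega))-\bar{p}_{\mathrm{nopt}})\bigr]$ over $Z$, and then derives the threshold by checking iteratively whether $Z=0,1,\dots$ is optimal---which is exactly your forward-difference/``no beneficial future window'' argument rephrased; your renewal-reward framing and the paper's SMDP framing are two names for the same decoupling, and both close with the same Dinkelbach/uniqueness step.
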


\begin{proof}[Proof sketch]
The problem \eqref{eq3-7} can be cast as an infinite-horizon average cost SMDP. Given $\Omega(R_i) = \omega$, the Bellman optimality equation of the average cost problem is
\begin{equation} \label{eq3-11}
	\begin{aligned}
		V_{\psi}(\omega) = &\mathop{\inf}_{Z_{i+1} \in \mathbb{N}} \Biggl\{ \mathbb{E}\left[\sum_{k=0}^{Z_{i+1} +Y_{i+1}-1} (H(\tau^{k}(\omega))-\bar{p}_{\mathrm{nopt}})\right] \Biggr\}\\
		&+  \mathbb{E}\Big[\omega^{*}V_{\psi}(1-q^{(Y_{i+1})}) + (1-\omega^{*})V_{\psi}(p^{(Y_{i+1})})\Big] ,
	\end{aligned}
\end{equation}
for all $\omega \in \{ p^{(n)},\, 1-q^{(n)} \}$, $n \in \mathbb{N}$, where $V_{\psi}(\omega)$ is the relative value function associated with the average cost problem \eqref{eq3-7}. Theorem \ref{theorem2} is proven by directly solving  \eqref{eq3-11}. The details are provided in \ref{B}.
\end{proof}

Theorem \ref{theorem2} signifies that the optimal solution to \eqref{eq3-7} is an index-based threshold policy, where the index function depends on the belief state. Specifically, the state of the source is submitted in time slot $t$ if two conditions are satisfied: ($i$) The channel is idle in time slot $t$, ($ii$) the index $\eta(\Omega(t))$ exceeds a threshold $\beta_{\psi}$ (\emph{i.e.}, $\eta(\Omega(t)) \geq \beta_{\psi}$). where the threshold $\beta_{\psi}$ is exactly the optimum value of \eqref{eq3-7}. 

For notational simplicity, we rewrite \eqref{eq3-5} as
\begin{equation}
	f(\beta_{\psi}) = f_1(\beta_{\psi}) - \beta_{\psi} f_2(\beta_{\psi}) = 0,
\end{equation}
where $f_1(\beta_{\psi}) =  \mathbb{E} \left[	\sum_{R_i(\beta_{\psi})}^{R_{i+1}(\beta_{\psi})-1} H(\omega(t))\right]$ and $f_2(\beta_{\psi}) = \mathbb{E} \left[ R_{i+1}(\beta_{\psi})-R_i(\beta_{\psi}) \right]$.
Then a low-complexity algorithm for finding the optimal objective value $\bar{p}_{\mathrm{nopt}}$ is provided as algorithm \ref{Algorithm 2}.
\begin{figure}
	\centering
	\includegraphics[width=	0.9\linewidth]{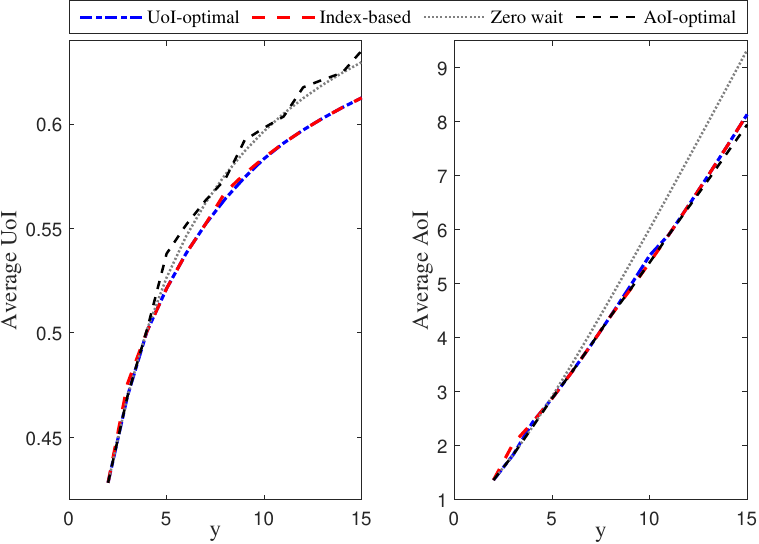}
	\caption{Average UoI and Average AoI v.s. $y$ with i.d.d random delay, where $P[Y_i = 1] = 0.8$ and $P[Y_i = y] = 0.2$, the dynamics of the Markov source depicted as $p = 0.05$ and $q = 0.2$. }
	\label{figure2}
\end{figure}

Compared the relative value function \eqref{eq3-11} with \eqref{eqtheorem1}, the core difference is that the state transition probabilities in \eqref{eq3-11} are irrelevant to the waiting time, which leads to an index policy for problem \eqref{eq3-7}. According to the convergence of $\tau^k(\omega)$ to $\omega^{*}$ in \eqref{eq3-4}, we induce that if $\mathbb{E}(Y_i)$ is large enough, the index policy is the optimal policy for \eqref{eq3-5}, \emph{i.e.}, $\pi^{*} = \psi$.

\begin{algorithm}
	\caption{Bisec-index algorithm}
	\label{Algorithm 2}
	\LinesNumbered
	\KwIn{$l=0$, $u=1$, tolerance $\epsilon >0$}
	\While {$u-l\ge \epsilon$}
	{
		$\beta_{\psi}:=(l+u)/2$\;
		$c := f(\beta_{\psi}) = f_1(\beta_{\psi}) - \beta_{\psi} f_2(\beta_{\psi})$\;
		\If{$c>0$}
		{
			$l := \beta_{\psi}$\;
		}
		\Else
		{$u:=\beta_{\psi}$\;}			
	}
	\KwOut{$\bar{p}_{\mathrm{nopt}}=\beta_{\psi}$} 
\end{algorithm}
		
\section{Numerical Results}\label{sectionIV}
This section presents numerical results that demonstrate the performance of the index-based threshold policy. As shown in \cite[Fig. 3]{UoI2022}, the $k$-step belief state evolutions with $p+q<1$ and with $p+q>1$ are quite different, so we operate both of them in the simulations. First, we evaluate the following four sampling policies:
\begin{enumerate}
	\item[1.] \emph{UoI-optimal:} The policy is given by Theorem \ref{theorem1}. 
	\item[2.] \emph{Index-based:} The policy is given by Theorem \ref{theorem2}.
	\item[3.] \emph{Zero wait:} An update is transmitted once the previous update is received, \emph{i.e.}, $Z_i=0$ for $\forall i \in \mathbb{N}$. This policy achieves the minimum delay and maximum throughput.
	\item[4.] \emph{AoI-optimal:} The AoI-optimal policy determines waiting time $Z_i$ by \cite[Theorem 4]{aoiopt2017} and \cite[Algorithm 2]{aoiopt2017}.
\end{enumerate}

\begin{figure}
	\centering
	\includegraphics[width=	0.9\linewidth]{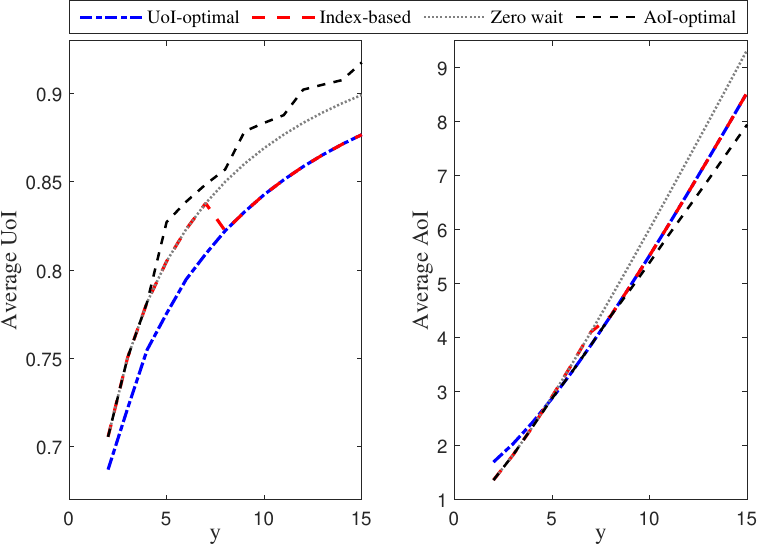}
	\caption{Average UoI and Average AoI v.s. $y$ with i.d.d random delay, where $P[Y_i = 1] = 0.8$ and $P[Y_i = y] = 0.2$, the dynamics of the Markov source depicted as $p = 0.7$ and $q = 0.95$.}
	\label{figure3} \vspace{-4mm}
\end{figure}

Fig. \ref{figure2} shows the four policies comparison in terms of average UoI and average AoI, when the binary Markov source evaluates with the probability $p+q<1$. The left panel shows that the average UoI obtained by the index policy is very close to the UoI-optimal one, compared to which Zero wait and AoI-optimal policy performs not well as $\mathbb{E}[Y_i]$ increases. However, the right panel reveals that AoI-optimal policy consistently achieves the lowest AoI. This implies that the desired goal the receiver tends to achieve leads to different result.

In Fig. \ref{figure3}, we compare the four policies performance for the evaluated case that $p+q>1$. The left panel shows a similar trend to the left panel of Fig. \ref{figure2}, except for the sub-optimal index-based policy. A watershed phenomenon occurs for the index-based policy: When $y \leq 6$, the index-based policy provides the same sampling strategy as zero wait policy, which performs worse than optimal policy; when $y > 6$, the index-based policy performs as well as the optimal policy. The reason for this phenomenon is that the index-based threshold policy is sub-optimal, deciding whether to sample or not by the comparison between the index and the constant value $\bar{p}_{\mathrm{nopt}}$. Therefore, this sub-optimal policy can not always take future circumstances into account if the value of $\mathbb{E}(Y_i)$ is small, thus ignores the oscillation of $H(\omega)$ when $p+q>1$. The index is always less than $\bar{p}_{\mathrm{nopt}}$ if $y \leq 6$, causing the zero wait policy. Otherwise, if $y > 6$, the index outweighs $\bar{p}_{\mathrm{nopt}}$ so the policy is optimal. The right panel demonstrates the lowest AoI is obtained by AoI-optimal policy as well.

The performance gains of UoI-optimal and index-based policies are close to the best average AoI, making them better choices when the system aims to optimize both AoI and UoI simultaneously. Moreover, Both of the pictures show that when $\mathbb{E}[Y_i]$ is large enough, the result of index policy is the same as that of the UoI-optimal policy, consistent with the theoretical induction we proposed before. But for what exact value $\mathbb{E}[Y_i]$ is meaning "large enough", is still an open issue.

\section{Conclusion}\label{sectionV}
In this paper, we have used UoI as a state-aware metric to estimate the value of information in a remote monitoring system. First we have put forward an optimal policy to minimize the time-average expected sum-UoI by two-layered \emph{bisec-RVI} algorithm. Based on the properties of belief state, we have further provided a sub-optimal index-based sampling policy owning lower computing complexity than the optimal one. The good performance of the sampling policies have been demonstrated by numerical simulations. Both of the proposed sampling policies outperform zero wait policy and AoI-optimal policy. Moreover, the performance of the sub-optimal policy approaches to that of the optimal policy, particularly under large delay. In the future work, it may be worthwhile to investigate the specific value of $\mathbb{E}[Y_i]$ that leads to the sub-optimal policy being identical to the optimal policy.

\bibliographystyle{IEEEtran}
\newpage
\bibliography{reference}
		
\clearpage
\appendices
%\begin{comment}
\section{The Proof of lemma \ref{lemmabs}} \label{AA}
Here is the detailed process for the evaluation of the belief state. With the transition matrix given in \eqref{eq2-2}, the evolution of the belief state can be written as
\begin{equation}
	\begin{aligned}
		\Omega(t+k) &= \Omega(t) \cdot (1-q^{(k)}) + (1-\Omega(t)) \cdot p^{(k)} \\
		&= p^{(k)} + \Omega(t)(1-p^{(k)}-q^{(k)}) \\
		&= \frac{p-p(1-p-q)^k}{p+q} + \Omega(t)(1-p-q)^k \\
		&= \frac{p-p(1-p-q)^k}{p+q} + \omega(1-p-q)^k
	\end{aligned}
\end{equation}The prood is completed.
%\end{comment}

\section{The proof of Theorem \ref{theorem1}} \label{A}
The sampling problem \eqref{eq3-5} is an infinite-horizon average cost semi-Markov decision process (SMDP) \cite[Chapter 5.6]{dynamic2012} \cite[Chapter 11]{puterman2014markov}. In accordance with the description provided in \cite[Chapter 5.6]{dynamic2012}, we elaborate on the various components of this problem blow: 
\begin{itemize}
	\item \textbf{Decision Time}: Because the control commands from the receiver are promptly delivered to the sampler without delay, each $i$-th receiving time $R_i = G_i + Y_i$ also serves as a decision time of the problem \eqref{eq3-5}, where $G_i$ is the submission time of the $i$-th packet, taking $Y_i$ time slots to be delivered to the receiver.
	\item \textbf{State}: The belief state $\Omega(t)$. At time slot $R_i$, the state of the system is the belief state $\Omega(R_i)$.
	\item \textbf{Action}: $Z_{i+1} = G_{i+1} - R_{i}$ is the waiting time for submitting the $(i+1)$-th packet after the $i$-th packet is received. We consider $G_0 = 0$, $G_i = \sum_{j=0}^{i-1} (Y_j + Z_{j+1})$ and $R_i = G_i + Y_i$ for each $i \in \{ 0,1,\cdots \}$. Consequently, given $(Y_0,Y_1,\cdots)$, the sequence $(G_1,G_2,\cdots)$ is uniquely determined by $(Z_1,Z_2,\cdots)$. Therefore, $ \pi = (Z_1,Z_2,\cdots)$ is a valid representation to a policy in $\Pi$.
	\item  \textbf{State Transitions}: The state $\Omega(t)$ evolves as follows:
	\begin{align} \Omega(t)=\left\{\begin{aligned} 
			&1-q^{(Y_i)}, \quad\;\; \text{if}\,\, t=R_i,S(G_i)=1 \\
			&p^{(Y_i)}, \quad \quad\quad\: \text{if}\,\, t=R_i,S(G_i)=0\\
			&\tau(\Omega(t-1)), \,\, \text{otherwise},
		\end{aligned}\right.\end{align}
	where $i \in \mathbb{N}$. 
	From \eqref{eq3-3} we can get the transition probability is
	\begin{align} &P[\Omega(R_{i+1})|\Omega(R_{i})] \\
		&=\left\{\begin{aligned} 
			&\tau^{Z_{i+1}+Y_{i+1}}(\Omega(R_{i})),  \quad\quad\: \text{if}\,\, \Omega(R_{i+1}) = 1-q^{(Y_{i+1})}\\
			&1-\tau^{Z_{i+1}+Y_{i+1}}(\Omega(R_{i})), \;\; \text{if}\,\, \Omega(R_{i+1}) = p^{(Y_{i+1})}.
		\end{aligned}\right.\end{align}
	\item \textbf{Expected Transition Time}: The expected time difference between the decision times $R_i$ and $R_{i+1}$ is $\mathbb{E}[R_{i+1} - R_i] = \mathbb{E}[Z_{i+1} + Y_{i+1}]$
	\item \textbf{Expected Transition Cost}: The expected cumulative cost incurred during the interval between the decision times $R_i$ and $R_{i+1}$ can be calculated as
	\begin{equation}
		\begin{aligned}
			\mathbb{E} \left[ \sum_{t=R_i}^{R_{i+1}-1} H(\Omega(t)) \right] &= \mathbb{E} \left[ \sum_{k=0}^{Z_{i+1} + Y_{i+1}-1} H(\Omega(R_i+k)) \right] \\
			&= \mathbb{E}\left[\sum_{k=0}^{Z_{i+1} +Y_{i+1}-1} H(\tau^{k}(\Omega(R_i)))\right]
		\end{aligned}
	\end{equation}
\end{itemize}

We can use dynamic programming \cite{dynamic2012} to solve the average-cost SMDP \eqref{eq3-5}. If $\Omega(R_i) = \omega$, then the optimal decision $Z_{i+1}$ for the $(i+1)$-th packet satisfies the following Bellman optimality equation
\begin{equation} \label{b-5}
	\begin{aligned}
		V^{*}(\omega) = &\mathop{\inf}_{Z_{i+1} \in \mathbb{N}} \Biggl\{ \mathbb{E}\left[\sum_{k=0}^{Z_{i+1} +Y_{i+1}-1} (H(\tau^{k}(\omega))-\bar{p}_{\mathrm{opt}})\right] \\
		&+  \mathbb{E}\Bigg[\tau^{Z_{i+1}+Y_{i+1}}(\omega)V^*(1-q^{(Y_{i+1})})\\
		&+(1-\tau^{Z_{i+1}+Y_{i+1}}(\omega))V^*(p^{(Y_{i+1})})\Bigg] \Biggr\},
	\end{aligned}
\end{equation}
where $V^{*}(\omega)$ is the relative value function associated with the average cost problem \eqref{eq3-7}.

According to \cite[Chapter 11]{puterman2014markov}, we can transform the SMDP into an equivalent Markov decision process (MDP).
First, we rewrite the objective function in problem \eqref{eq3-5} by taking the limit superior of the ratio of the total expected transition cost to the $n$-th decision epoch ($n \in \mathbb{N^+}$) to the total expected transition time as follows:
\begin{equation} \label{b-6}
	\begin{aligned}
		&\mathop{\limsup}_{T \rightarrow \infty} \frac{1}{T} \mathbb{E} \left[\sum_{t=0}^{T-1} H(\Omega(t)) \right] \\
		&= \mathop{\lim}_{n \rightarrow \infty} \frac{\sum_{i=0}^{n-1} \mathbb{E} \left[ \sum_{t=R_i}^{R_{i+1}-1} H(\Omega(t)) \right] }{\sum_{i=0}^{n-1} \mathbb{E}[R_{i+1} - R_i] } \\
		&= \mathop{\lim}_{n \rightarrow \infty} \frac{\sum_{i=0}^{n-1} \mathbb{E}\left[\sum_{k=0}^{Z_{i+1} +Y_{i+1}-1} H(\tau^{k}(\omega))\right] }{\sum_{i=0}^{n-1} \mathbb{E}[Z_{i+1} + Y_{i+1}] },
	\end{aligned}
\end{equation}
where $\Omega(R_i) = \omega$.
As a result, \eqref{eq3-5} can be expressed as
\begin{equation} \label{b-7}
	\bar{p}_{\mathrm{opt}} = \mathop{\inf}_{\pi \in \Pi} \mathop{\lim}_{n \rightarrow \infty} \frac{\sum_{i=0}^{n-1} \mathbb{E}\left[\sum_{k=0}^{Z_{i+1} +Y_{i+1}-1} H(\tau^{k}(\omega))\right] }{\sum_{i=0}^{n-1} \mathbb{E}[Z_{i+1} + Y_{i+1}] }.
\end{equation}

To solve the problem \eqref{b-7}, we consider the following problem with a parameter $\beta \geq 0$:
%\begin{equation} 
	\begin{align}\label{b-8}
		&g(\beta) \triangleq 
		\mathop{\inf}_{\phi \in \Phi} \mathop{\lim}_{n \rightarrow \infty} \frac{1}{n} \sum_{i=0}^{n-1} \Bigg\{ \mathbb{E}\left[\sum_{k=0}^{Z_{i+1} +Y_{i+1}-1} H(\tau^{k}(\omega))\right] \notag \\ 
		& \quad\qquad\qquad\qquad\qquad\qquad\qquad -\beta  \mathbb{E}[Z_{i+1} + Y_{i+1}] \Bigg\} \notag \\
		&= \mathop{\inf}_{\phi \in \Phi} \mathop{\lim}_{n \rightarrow \infty} \frac{1}{n} \sum_{i=0}^{n-1}  \mathbb{E}\left[\sum_{k=0}^{Z_{i+1} +Y_{i+1}-1} [H(\tau^{k}(\omega))-\beta]\right] .
	\end{align}
%\end{equation}

For each constant $\beta$, the problem \eqref{b-8} can be cast as a MDP problem with components below:
\begin{itemize}
	\item \textbf{State}: The state of the MDP is the belief state $\Omega(R_i)$ at time $R_i$ for $\forall i \in \mathbb{N}$.
	\item \textbf{Action}: The waiting time $Z_{i}$ for $\forall i \in \mathbb{N^+}$.
	\item \textbf{Transition Probability}: The transition probability is given by
	\begin{align} &P[\Omega(R_{i+1})|\Omega(R_{i})] \\
		&=\left\{\begin{aligned} 
			&\tau^{Z_{i+1}+Y_{i+1}}(\Omega(R_{i})),  \quad\quad\: \text{if}\,\, \Omega(R_{i+1}) = 1-q^{(Y_{i+1})}\\
			&1-\tau^{Z_{i+1}+Y_{i+1}}(\Omega(R_{i})), \;\; \text{if}\,\, \Omega(R_{i+1}) = p^{(Y_{i+1})}.
		\end{aligned}\right.\end{align}
	\item \textbf{Cost Function}: The cost function is defined by
	\begin{equation}
		c(\omega,Z_{i+1},\beta) =  \mathbb{E}\left[\sum_{k=0}^{Z_{i+1} +Y_{i+1}-1} (H(\tau^{k}(\omega))-\beta)\right],
	\end{equation}
	where $\Omega(R_i) = \omega$.
\end{itemize}

As the state space is finite and the optimality of the MDP can be achieved by a unichain policy, the optimal policy $\phi^*_{\beta}$ is determined by the optimality equation as follows:
\begin{equation} \label{b-11}
	\begin{aligned}
		&g(\beta) + V(\omega,\beta)= \\
		&  \mathop{\inf}_{Z_{i+1} \in \mathbb{N}}  \Bigg\{c(\omega,Z_{i+1},\beta) + \mathbb{E}\Bigg[\tau^{Z_{i+1}+Y_{i+1}}(\omega)V(1-q^{(Y_{i+1})},\beta)\\
	&+(1-\tau^{Z_{i+1}+Y_{i+1}}(\omega))V(p^{(Y_{i+1})},\beta) \Bigg]\Bigg\},
	\end{aligned}
\end{equation}
where $g(\beta)$ is the average cost achieved by the optimal policy, $V(\omega,\beta)$ is the value function.

This infinite-horizon MDP problem can be solved by using relative value iteration (RVI) \cite[Section 6.5.2]{krishnamurthy2016partially}. To adapt to the RVI algorithm, we rewrite the optimality equation \eqref{b-11} in the following form that is anchored at state $p$. Define $\widetilde{V}(\omega,\beta) = V(\omega,\beta) - V(p,\beta)$, so $\widetilde{V}(p,\beta) = 0$. Then \eqref{b-11} can be rewritten relative to state $p$ as
\begin{equation} \label{b-12}
	\begin{aligned}
		g(\beta) + \widetilde{V}(\omega,\beta) = \mathop{\inf}_{Z_{i+1} \in \mathbb{N}}  \left\{c(\omega,Z_{i+1},\beta) + r(\omega,Z_{i+1},\beta)\right\},
	\end{aligned}
\end{equation}
\begin{equation} \label{b-13}
	g(\beta) = \mathop{\inf}_{Z_{i+1} \in \mathbb{N}}  \left\{c(p,Z_{i+1},\beta) + r(p,Z_{i+1},\beta)\right\},
\end{equation}
where 
\begin{equation} \label{b-14}
	\begin{aligned}
		r(\omega,Z_{i+1},\beta)=&\mathbb{E}\Bigg[\tau^{Z_{i+1}+Y_{i+1}}(\omega)\widetilde{V}(1-q^{(Y_{i+1})},\beta)\\
		&+(1-\tau^{Z_{i+1}+Y_{i+1}}(\omega))\widetilde{V}(p^{(Y_{i+1})},\beta) \Bigg] .
	\end{aligned}
\end{equation}
It is worthwhile to note that we assume that the state $p$ belongs to the state space here, if not the anchor point can be changed to other state in the state space.

Next, we proof that the optimal policy $\pi^*$ for problem \eqref{eq3-5} and the optimal policy $\phi^*_{\bar{p}_{\mathrm{opt}}}$ for problem $g(\bar{p}_{\mathrm{opt}})$ in \eqref{b-8} are identical, \emph{i.e.}, $\pi^*=\phi^*_{\bar{p}_{\mathrm{opt}}}$.
By similarly applying Dinkelbach’s method for nonlinear fractional programming as in \cite{dinkelbach1967nonlinear} and \cite[lemma 2]{DBLP:journals/tit/SunPU20}, we get two assertions in the following lemma: 
\begin{lemma} \label{lemma2}
	There are two assertions that hold true as follows:
	\begin{enumerate}
		\item $\bar{p}_{\mathrm{opt}} \lesseqgtr\beta$ if and only if $g(\beta) \lesseqgtr 0$.
		\item  $\bar{p}_{\mathrm{opt}}$ is the unique root of $g(\beta)=0$. 
	\end{enumerate}
\end{lemma}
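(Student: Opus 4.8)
The plan is to recast problem \eqref{b-7} in the ratio form that underlies Dinkelbach's method and then run the standard two-sided comparison. Writing $C(\phi)\triangleq\lim_{n}\frac1n\sum_{i=0}^{n-1}\mathbb{E}[\sum_{k=0}^{Z_{i+1}+Y_{i+1}-1}H(\tau^k(\omega))]$ for the average per-epoch cost and $T(\phi)\triangleq\lim_{n}\frac1n\sum_{i=0}^{n-1}\mathbb{E}[Z_{i+1}+Y_{i+1}]$ for the average per-epoch transition time, we have $\bar p_{\mathrm{opt}}=\inf_{\phi}C(\phi)/T(\phi)$ and, from \eqref{b-8}, $g(\beta)=\inf_{\phi}\{C(\phi)-\beta T(\phi)\}$. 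The single fact that drives everything is that each transition time satisfies $Z_{i+1}+Y_{i+1}\ge Y_{i+1}\ge 1$, so $T(\phi)\ge 1>0$ for every policy; this makes the division in $C/T$ legitimate and, since $C(\phi)-\beta T(\phi)=(C(\phi)/T(\phi)-\beta)\,T(\phi)$ is affine and strictly decreasing in $\beta$ with slope $-T(\phi)\le-1$, it makes $g$ a non-increasing (concave) lower envelope of affine functions.

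First I would prove assertion 1. For the forward direction, suppose $\bar p_{\mathrm{opt}}<\beta$. By definition of the infimum there is a policy $\phi$ with $C(\phi)/T(\phi)<\beta$; multiplying by $T(\phi)>0$ gives $C(\phi)-\beta T(\phi)<0$, and taking the infimum over policies yields $g(\beta)\le C(\phi)-\beta T(\phi)<0$. Conversely, suppose $\bar p_{\mathrm{opt}}\ge\beta$; then $C(\phi)/T(\phi)\ge\bar p_{\mathrm{opt}}\ge\beta$ for every $\phi$, so $C(\phi)-\beta T(\phi)\ge0$ for all $\phi$ and hence $g(\beta)\ge0$. These two statements are contrapositives, giving $g(\beta)<0\iff\bar p_{\mathrm{opt}}<\beta$. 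The same bookkeeping with the reversed inequality, now invoking the uniform bound $C(\phi)-\beta T(\phi)\ge(\bar p_{\mathrm{opt}}-\beta)T(\phi)\ge\bar p_{\mathrm{opt}}-\beta$ when $\bar p_{\mathrm{opt}}>\beta$, shows $g(\beta)>0\iff\bar p_{\mathrm{opt}}>\beta$, and the remaining case $g(\beta)=0\iff\bar p_{\mathrm{opt}}=\beta$ follows because the three strict/equal alternatives are exhaustive and mutually exclusive on both sides. This is precisely the $\lesseqgtr$ statement.

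Assertion 2 is then immediate from this trichotomy: taking $\beta=\bar p_{\mathrm{opt}}$ in assertion 1 gives $g(\bar p_{\mathrm{opt}})=0$, so $\bar p_{\mathrm{opt}}$ is a root; and if $g(\beta)=0$ for any $\beta$, assertion 1 forces $\bar p_{\mathrm{opt}}=\beta$, so the root is unique. The one step I expect to require care is verifying $g(\bar p_{\mathrm{opt}})=0$ rather than merely $g(\bar p_{\mathrm{opt}})\ge0$: the inequality $g(\bar p_{\mathrm{opt}})\ge0$ is free, but to exclude $g(\bar p_{\mathrm{opt}})>0$ I need a policy that attains (or approaches, with bounded transition time) the optimal ratio, so that $(C(\phi)/T(\phi)-\bar p_{\mathrm{opt}})\,T(\phi)\to0$. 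Under the paper's standing assumption that the delays take finitely many values, the belief space is finite and the average-cost SMDP admits an optimal stationary (unichain) policy, which supplies such a $\phi$ with finite $T(\phi)$; this attainment, together with the lower bound $T(\phi)\ge1$, is the load-bearing fact that makes the Dinkelbach equivalence go through. I would also note in passing that the existence of the limits defining $C(\phi)$ and $T(\phi)$, i.e.\ the passage from \eqref{b-6} to \eqref{b-7}, rests on the renewal-reward structure already used to derive \eqref{b-7} and is inherited here rather than re-proved.
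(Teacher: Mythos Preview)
Your proposal is correct and follows essentially the same Dinkelbach-type argument as the paper's Appendix~\ref{D}: both exploit the positivity and finiteness of the mean epoch length $T(\phi)$ to pass between the ratio inequality $C(\phi)/T(\phi)\lesseqgtr\beta$ and the affine inequality $C(\phi)-\beta T(\phi)\lesseqgtr0$, and then take infima. Your treatment is in fact slightly more careful than the paper's on two points: you explicitly flag the attainment issue needed for $g(\bar p_{\mathrm{opt}})=0$ (which the paper handles only implicitly), and you obtain uniqueness directly from the trichotomy of Part~(i) rather than from the concavity/strict monotonicity of $g$ used in the paper.
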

\begin{proof}
	Please see Appendix \ref{D} for detaied proof.
\end{proof}
Consequently, we have $g(\bar{p}_{\mathrm{opt}}) = 0$. And there exists an optimal policy $\phi^*_{\bar{p}_{\mathrm{opt}}}=(Z_1,Z_2,\cdots)$ satisfying that
\begin{equation} \label{b-15}
	\begin{aligned}
		&\mathop{\lim}_{n \rightarrow \infty} \frac{1}{n} \sum_{i=0}^{n-1} \Bigg\{ \mathbb{E}\left[\sum_{k=0}^{Z_{i+1} +Y_{i+1}-1} H(\tau^{k}(\omega))\right] \\ 
		&\qquad\qquad\qquad\qquad\qquad\qquad -\bar{p}_{\mathrm{opt}} \mathbb{E}[Z_{i+1} + Y_{i+1}] \Bigg\} = 0,
	\end{aligned}
\end{equation}
which implies that for policy $\phi^*_{\bar{p}_{\mathrm{opt}}}$, 
\begin{equation}  \label{b-16}
	\mathop{\lim}_{n \rightarrow \infty} \frac{\sum_{i=0}^{n-1} \mathbb{E}\left[\sum_{k=0}^{Z_{i+1} +Y_{i+1}-1} H(\tau^{k}(\omega))\right] }{\sum_{i=0}^{n-1} \mathbb{E}[Z_{i+1} + Y_{i+1}] } =\bar{p}_{\mathrm{opt}}.
\end{equation}
Because \eqref{b-16} is equal to \eqref{b-7}, the policy  $\phi^*_{\bar{p}_{\mathrm{opt}}}$ is also the optimal policy of problem \eqref{eq3-5}, \emph{i.e.}, $\pi^*=\phi^*_{\bar{p}_{\mathrm{opt}}}$.

Therefore, the optimality equation of the SMDP problem \eqref{b-5} can be converted into the optimality equation of the MDP problem $g(\bar{p}_{\mathrm{opt}})$ in \eqref{b-8}, given by
\begin{equation} \label{eqb-20}
	\begin{aligned}
		&g(\bar{p}_{\mathrm{opt}}) + \widetilde{V}(\omega,\bar{p}_{\mathrm{opt}}) \\
		&= \mathop{\inf}_{Z_{i+1} \in \mathbb{N}}  \left\{c(\omega,Z_{i+1},\bar{p}_{\mathrm{opt}}) + r(\omega,Z_{i+1},\bar{p}_{\mathrm{opt}})\right\},
	\end{aligned}
\end{equation}
\begin{equation} \label{eqb-21}
	g(\bar{p}_{\mathrm{opt}}) = \mathop{\inf}_{Z_{i+1} \in \mathbb{N}}  \left\{c(p,Z_{i+1},\bar{p}_{\mathrm{opt}}) + r(p,Z_{i+1},\bar{p}_{\mathrm{opt}})\right\},
\end{equation}
where 
\begin{equation}
	c(\omega,Z_{i+1},\bar{p}_{\mathrm{opt}}) =  \mathbb{E}\left[\sum_{k=0}^{Z_{i+1} +Y_{i+1}-1} (H(\tau^{k}(\omega))-\bar{p}_{\mathrm{opt}})\right],
\end{equation}
\begin{equation}
	\begin{aligned}
		r(\omega,Z_{i+1},\bar{p}_{\mathrm{opt}})=&\mathbb{E}\Bigg[\tau^{Z_{i+1}+Y_{i+1}}(\omega)\widetilde{V}(1-q^{(Y_{i+1})},\bar{p}_{\mathrm{opt}})\\
		&+(1-\tau^{Z_{i+1}+Y_{i+1}}(\omega))\widetilde{V}(p^{(Y_{i+1})},\bar{p}_{\mathrm{opt}}) \Bigg] ,
	\end{aligned}
\end{equation}
for all $\omega \in \{ p^{(n)},\, 1-q^{(n)} \}$, $n \in \mathbb{N}$.

This completes the proof.

\section{The proof of Theorem \ref{theorem2}} \label{B}
The sampling problem \eqref{eq3-7} is an infinite-horizon average cost SMDP \cite[Chapter 5.6]{dynamic2012}. The components of this problem are the same as the SMDP in the proof of Theorem \ref{theorem1} in Appendix \ref{A} with one exception. In this case, from the assumption in \eqref{eq3-6}, the transition probability is given as
\begin{align} P[\Omega(R_{i+1})|\Omega(R_{i})]  \label{eqB-1}
	=\left\{\begin{aligned} 
		&\omega^{*},  \quad\quad\: \text{if}\,\, \Omega(R_{i+1}) = 1\\
		&1-\omega^{*}, \;\; \text{if}\,\, \Omega(R_{i+1}) = 0.
	\end{aligned}\right.\end{align}

The average-cost SMDP \eqref{eq3-7} can be solved by using dynamic programming \cite{dynamic2012}. If $\Omega(R_i) = \omega$, then the optimal decision $Z_{i+1}$ for the $(i+1)$-th packet satisfies the following Bellman optimality equation
\begin{equation} 
	\begin{aligned} \label{eqB-2}
		V_{\psi}(\omega) = &\mathop{\inf}_{Z_{i+1} \in \mathbb{N}} \Biggl\{ \mathbb{E}\left[\sum_{k=0}^{Z_{i+1} +Y_{i+1}-1} (H(\tau^{k}(\omega))-\bar{p}_{\mathrm{nopt}})\right] \Biggr\}\\
		&+  \mathbb{E}\Big[\omega^{*}V_{\psi}(1-q^{(Y_{i+1})}) + (1-\omega^{*})V_{\psi}(p^{(Y_{i+1})})\Big] ,
	\end{aligned}
\end{equation}
for all $\omega \in \{ p^{(n)},\, 1-q^{(n)} \}$, $n \in \mathbb{N}$, where $V_{\psi}(\omega)$ is the relative value function associated with the average cost problem \eqref{eq3-7}. 

Let $\bar{Z} = Z(\omega, \bar{p}_\mathrm{nopt})$ be an optimal solution to \eqref{eqB-2}, which means that given $\Omega(R_i) = \omega$, then the optimal waiting time $Z_{i+1}$ for sending the $(i+1)$-th packet is $\bar{Z}$. Because the terms 
\begin{equation*}
	\mathbb{E}\Big[\omega^{*}V(1-q^{(Y_{i+1})}) + (1-\omega^{*})V(p^{(Y_{i+1})})\Big]
\end{equation*}
do not depend on the waiting time $\bar{Z}$, we can reformulate the optimization problem as
\begin{equation} \label{eqB-5}
	\mathop{\inf}_{\bar{Z} \in \mathbb{N}}  \mathbb{E}\left[\sum_{k=0}^{\bar{Z} +Y_{i+1}-1} (H(\tau^{k}(\omega))-\bar{p}_{\mathrm{nopt}})\right] .
\end{equation}
Based on \eqref{eqB-5}, we can derive that $\bar{Z} = 0$ if
\begin{equation} \label{eqB-6}
	\begin{aligned}
		&\mathop{\inf}_{\bar{Z} \in \mathbb{N^+}}  \mathbb{E}\left[\sum_{k=0}^{\bar{Z} +Y_{i+1}-1}(H(\tau^{k}(\omega))-\bar{p}_{\mathrm{nopt}})\right]  \\
		&\geq \mathbb{E} \left[\sum_{k=0}^{Y_{i+1}-1}  (H(\tau^{k}(\omega))-\bar{p}_{\mathrm{nopt}})\right] .
	\end{aligned}
\end{equation}
After some rearrangement and elimination, the inequality \eqref{eqB-6} can also be expressed as
\begin{equation} \label{eqB-7}
	\mathop{\inf}_{\bar{Z} \in \mathbb{N^+}}  \mathbb{E}\left[\sum_{k=0}^{\bar{Z} -1}[H(\tau^{k+Y_{i+1}}(\omega))-\bar{p}_{\mathrm{nopt}}]\right]  \geq 0.
\end{equation}
Because the left-hand side of \eqref{eqB-7} are the infimum of strictly increasing and linear functions of $\bar{p}_{\mathrm{nopt}}$, The inequality \eqref{eqB-7} holds if and only if
\begin{equation} \label{eqB-8}
	\mathop{\inf}_{\bar{Z} \in \mathbb{N^+}} \Biggl\{ \frac{1}{\bar{Z}} \mathbb{E}\left[\sum_{k=0}^{\bar{Z} -1}[H(\tau^{k+Y_{i+1}}(\omega))]\right] \Biggr\} \geq \bar{p}_{\mathrm{nopt}}.
\end{equation}
The left-hand side of \eqref{eqB-8} has the same expression as the index function $\eta(\omega)$ given by \eqref{eq3-8}. Similarly, $\bar{Z} = 1$, if $\bar{Z} \neq 0$ and
\begin{equation} \label{eqB-9}
	\mathop{\inf}_{\bar{Z} \in \{2,3,\cdots\}}  \mathbb{E}\left[\sum_{k=0}^{\bar{Z} -1}[H(\tau^{k+Y_{i+1}}(\omega))-\bar{p}_{\mathrm{nopt}}]\right] \geq 0,
\end{equation}
which is equivalent to
\begin{equation} \label{eqB-10}
	\eta(\tau(\omega)) \geq \bar{p}_{\mathrm{nopt}}.
\end{equation}

By repetition of the same steps as in \eqref{eqB-8}-\eqref{eqB-9}, we can obtain $\bar{Z} = k$ is optimal, if $\bar{Z} \neq 0,1,\cdots,k-1$ and
\begin{equation} \label{eqB-11}
	\eta(\tau^k(\omega)) \geq \bar{p}_{\mathrm{nopt}}.
\end{equation}
Therefore, the optimal waiting time $Z_{i+1} = \bar{Z}$ satisfies
\begin{equation} \label{eqB-12}
	\bar{Z} = Z(\omega,\bar{p}_{\mathrm{nopt}}) = \min \{k \in \mathbb{N}:\eta(\tau^k(\omega)) \geq \bar{p}_{\mathrm{nopt}} \}.
\end{equation}

Next, we compute the optimal objective value $\bar{p}_{\mathrm{nopt}}$. Rewrite $\Omega(R_i)$ as
$	\Omega(R_i(\beta_{\psi})) =\omega= \tau^{Y_i} (\omega_0)$,
where $\omega_0 = 1-q$ or $\omega_0 = p$.
And let $Z_{i}$ denote the waiting time of the $i$-th submitted information at the receiver. Because $R_{i+1}(\beta_{\psi}) = R_i(\beta_{\psi}) + Z_{i+1}(\beta_{\psi}) + Y_{i+1}$, then we have 
\begin{equation} \label{eq3-22}
	\begin{aligned}
		&\mathbb{E} \left[	\sum_{t=R_i(\beta_{\psi})}^{R_{i+1}(\beta_{\psi})-1} H(\Omega(t))\right] - \beta_{\psi} \mathbb{E} \left[ R_{i+1}(\beta_{\psi})-R_i(\beta_{\psi}) \right] \\
		&= \mathbb{E} \left[ \sum_{k=0}^{Z_{i+1}(\beta_{\psi})-1} H(\tau^{k+Y_{i+1}+Y_{i}}(\omega_0)) \right] \\
		&+ \mathbb{E} \left[ \sum_{k=0}^{Y_{i+1}-1} H(\tau^{k+Y_{i}}(\omega_0)) \right] - \beta_{\psi} \mathbb{E}\left[ Z_{i+1}(\beta_{\psi}) + Y_{i+1} \right],
	\end{aligned}
\end{equation}
which implies that $\bar{p}_{\mathrm{nopt}} = \beta_{\psi}$ is the root of \eqref{eq3-5}, if the following equation holds
\begin{equation} \label{eq3-23}
	\begin{aligned}
		&\mathbb{E} \left[ \sum_{k=0}^{Z_{i+1}(\beta_{\psi}) -1} H(\tau^{k+Y_{i+1}+Y_{i}}(\omega_0)) \right] +\\
		& \mathbb{E} \left[ \sum_{k=0}^{Y_{i+1}-1} H(\tau^{k+Y_{i}}(\omega_0)) \right] - \bar{p}_{\mathrm{nopt}} \mathbb{E}\left[ Z_{i+1}(\beta_{\psi}) + Y_{i+1} \right] = 0.
	\end{aligned}
\end{equation}
From \eqref{eqB-2}, we can get that
\begin{equation}  \label{eq3-24}
	\begin{aligned}
		&V_{\psi} (\tau^{Y_i}(\omega_0)) \\
		&= \mathbb{E} \left[ \sum_{k=0}^{Y_{i+1}-1} [H(\tau^{k+Y_i}(\omega_0)) - \bar{p}_{\mathrm{nopt}}] \right] \\
		&+ \mathbb{E} \left[ \sum_{k=0}^{Z_{i+1}(\beta_{\psi}) -1} [H(\tau^{k+Y_{i+1}+Y_i}{(\omega_0)}) - \bar{p}_{\mathrm{nopt}}] \right] \\
		&+ \mathbb{E} \left[ \omega^* V_{\psi}(\tau^{Y_{i+1}}(1-q)) +  (1-\omega^*)V_{\psi}(\tau^{Y_{i+1}}(p)) \right],
	\end{aligned}
\end{equation}
According to the law of iterated expectations, if we take expectation over $\omega_0$ on both sides of \eqref{eq3-24}, the following equation is obtained
%\begin{equation}
	\begin{align} \label{c-18}
		&\mathbb{E} \left[ V_{\psi} (\tau^{Y_i}(\omega_0)) \right] \notag \\
		&= \mathbb{E} \left[ \sum_{k=0}^{Y_{i+1}-1} [H(\tau^{k+Y_i}(\omega_0)) - \bar{p}_{\mathrm{nopt}}] \right] \notag\\
		&+ \mathbb{E} \left[ \sum_{k=0}^{Z_{i+1}(\beta_{\psi}) -1} [H(\tau^{k+Y_{i+1}+Y_i}(\omega_0)) - \bar{p}_{\mathrm{nopt}}] \right] \notag\\
		&+ \mathbb{E} \left[ V_{\psi}(\tau^{Y_{i+1}}(\omega_0)) \right].
	\end{align}
%\end{equation}
Because the remote source is a finite-state ergodic Markov chain with a unique stationary distribution, we have that $\mathbb{E} \left[ V_{\psi} (\tau^{Y_i}(\omega_0)) \right] = \mathbb{E} \left[ V_{\psi} (\tau^{Y_{i+1}}(\omega_0)) \right]$. Consequently, \eqref{c-18} is transformed into
\begin{equation}
	\begin{aligned}
		&\mathbb{E} \left[ \sum_{k=0}^{Y_{i+1}-1} [H(\tau^{k+Y_i}(\omega_0)) - \bar{p}_{\mathrm{nopt}}] \right] \\
		&+ \mathbb{E} \left[ \sum_{k=0}^{Z_{i+1}(\beta_{\psi}) -1} [H(\tau^{k+Y_{i+1}+Y_i}(\omega_0)) - \bar{p}_{\mathrm{nopt}}] \right] =0,
	\end{aligned}
\end{equation}
which is equivalent to \eqref{eq3-23}.

Finally, we prove that the root of \eqref{eq3-5} is unique. According to the optimal index policy, the right-hand side of \eqref{eq3-22} can be expressed as
\begin{equation} \label{c-20}
	\begin{aligned}
		&\mathbb{E} \left[ \sum_{k=0}^{Z_{i+1}(\beta_{\psi})-1} [H(\tau^{k+Y_{i+1}+Y_{i}}(\omega_0)) - \beta_{\psi}]\right] \\
		&+ \mathbb{E} \left[ \sum_{k=0}^{Y_{i+1}-1} [H(\tau^{k+Y_{i}}(\omega_0)) -\beta_{\psi}] \right] \\
		&= \mathop{\inf}_{\mu \in \mathbb{N}} \mathbb{E} \left[ \sum_{k=0}^{\mu -1} [H(\tau^{k+Y_{i+1}+Y_{i}}(\omega_0)) - \beta_{\psi}] \right] \\
		&+ \mathbb{E} \left[ \sum_{k=0}^{Y_{i+1}-1} [H(\tau^{k+Y_{i}}(\omega_0)) -\beta_{\psi}] \right].
	\end{aligned}
\end{equation}
Since the first term is the pointwise infimum of the linear decreasing functions of $\beta_{\psi}$ and the second term is a linear decreasing function of $\beta_{\psi}$, the right-hand side of \eqref{c-20} is a continuous, concave, and strictly decreasing function of $\beta_{\psi}$. Thus the first term of the right-hand side of \eqref{c-20} satisfies
\begin{equation}
	\mathop{\lim}_{\beta_{\psi} \rightarrow -\infty} \mathop{\inf}_{\mu \in \mathbb{N}} \mathbb{E} \left[ \sum_{k=0}^{\mu -1} [H(\tau^{k+Y_{i+1}+Y_{i}}(\omega_0)) - \beta_{\psi}] \right] = \infty,
\end{equation}
and 
\begin{equation}
	\mathop{\lim}_{\beta_{\psi} \rightarrow \infty} \mathop{\inf}_{\mu \in \mathbb{N}} \mathbb{E} \left[ \sum_{k=0}^{\mu -1} [H(\tau^{k+Y_{i+1}+Y_{i}}(\omega_0)) - \beta_{\psi}] \right] = -\infty,
\end{equation}
and the second term of the right-hand side of \eqref{c-20} satisfies
\begin{equation}
	\mathop{\lim}_{\beta_{\psi} \rightarrow -\infty} \mathbb{E} \left[ \sum_{k=0}^{Y_{i+1}-1} [H(\tau^{k+Y_{i}}(\omega_0)) -\beta_{\psi}] \right] = \infty,
\end{equation}
and
\begin{equation}
	\mathop{\lim}_{\beta_{\psi} \rightarrow \infty} \mathbb{E} \left[ \sum_{k=0}^{Y_{i+1}-1} [H(\tau^{k+Y_{i}}(\omega_0)) -\beta_{\psi}] \right] = -\infty.
\end{equation}
 Therefore, we can obtain that \eqref{eq3-5} has a unique root. This completes the proof.

\section{Proof of Lemma \ref{lemma2}} \label{D}
\subsection{Proof of Part ($i$)}
We prove that $\bar{p}_{\mathrm{opt}} \lesseqgtr\beta$ if and only if $g(\beta) \lesseqgtr 0$.

1) $\bar{p}_{\mathrm{opt}} \leq \beta \Leftrightarrow g(\beta) \leq 0$

If $\bar{p}_{\mathrm{opt}} \leq \beta$, there exists a policy $\pi=(Z_1,Z_2,\cdots) \in \Pi$ that is feasible for both \eqref{b-7} and \eqref{b-8} such that
\begin{equation} \label{d-1}
	\mathop{\lim}_{n \rightarrow \infty} \frac{\sum_{i=0}^{n-1} \mathbb{E}_{\pi}\left[\sum_{k=0}^{Z_{i+1} +Y_{i+1}-1} H(\tau^{k}(\omega))\right] }{\sum_{i=0}^{n-1} \mathbb{E}_{\pi}[Z_{i+1} + Y_{i+1}] } \leq \beta,
\end{equation}
which is equivalent to
\begin{equation}\label{d-2}
	\mathop{\lim}_{n \rightarrow \infty} \frac{\frac{1}{n}\sum_{i=0}^{n-1} \mathbb{E}_{\pi}\left[\sum_{k=0}^{Z_{i+1} +Y_{i+1}-1} [H(\tau^{k}(\omega))-\beta]\right] }{\frac{1}{n}\sum_{i=0}^{n-1} \mathbb{E}_{\pi}[Z_{i+1} + Y_{i+1}] } \leq 0.
\end{equation}
Since the inter-sampling times ($Y_{i+1}+Z_{i+1}$) are regenerative \cite[Section 6.1]{haas2006stochastic}, $Y_{i+1}>0$, and $0 \leq Z_{i+1}<\infty$, the limit $\mathop{\lim}_{n \rightarrow \infty} \frac{1}{n}\sum_{i=0}^{n-1} \mathbb{E}[Z_{i+1} + Y_{i+1}]$ always exists and positive, which means that
\begin{equation}\label{d-3}
	0<\mathop{\lim}_{n \rightarrow \infty} \frac{1}{n}\sum_{i=0}^{n-1} \mathbb{E}[Z_{i+1} + Y_{i+1}]<\infty.
\end{equation}
for any policy. Thus we have
\begin{equation}\label{d-4}
	\mathop{\lim}_{n \rightarrow \infty} \frac{1}{n}\sum_{i=0}^{n-1} \mathbb{E}\left[\sum_{k=0}^{Z_{i+1} +Y_{i+1}-1} [H(\tau^{k}(\omega))-\beta]\right] \leq 0,
\end{equation}
which implies that $g(\beta) \leq 0$, as $g(\beta)$ is the infimum value.

On the reverse direction, if $g(\beta) \leq 0$, we can know that there exists a policy $\pi=(Z_1,Z_2,\cdots) \in \Pi$ that is feasible for both \eqref{b-7} and \eqref{b-8} satisfying \eqref{d-4}. Because \eqref{d-3} always holds, we can obtain that \eqref{d-2} and \eqref{d-1} holds. Therefore, $\bar{p}_{\mathrm{opt}} \leq \beta$.

2) $\bar{p}_{\mathrm{opt}} > \beta \Leftrightarrow g(\beta) > 0$

If $\bar{p}_{\mathrm{opt}} > \beta$, for any policy $\pi = (Z_1,Z_2,\cdots) \in \Pi$ that is feasible for both \eqref{b-7} and \eqref{b-8} we have that
\begin{equation} \label{d-5}
	\mathop{\lim}_{n \rightarrow \infty} \frac{\sum_{i=0}^{n-1} \mathbb{E}\left[\sum_{k=0}^{Z_{i+1} +Y_{i+1}-1} H(\tau^{k}(\omega))\right] }{\sum_{i=0}^{n-1} \mathbb{E}[Z_{i+1} + Y_{i+1}] } > \beta.
\end{equation}
With \eqref{d-3} in hand, we derive that for any policy
\begin{equation}\label{d-6}
	\mathop{\lim}_{n \rightarrow \infty} \frac{1}{n}\sum_{i=0}^{n-1} \mathbb{E}\left[\sum_{k=0}^{Z_{i+1} +Y_{i+1}-1} [H(\tau^{k}(\omega))-\beta]\right] > 0,
\end{equation}
which implies that $g(\beta) > 0$.

On the contrary, if the infimum value $g(\beta)>0$, then \eqref{d-6} holds for any policy. Similar to the proof of $\bar{p}_{\mathrm{opt}} \leq \beta \Leftrightarrow g(\beta) \leq 0$, we can finally derive that $\bar{p}_{\mathrm{opt}} > \beta$.

\subsection{Proof of Part ($ii$)}
From \eqref{b-8}, we can see that $g(\beta)$ is the pointwise infimum of the linear decreasing functions of $\beta$. Consequently, the root of $g(\beta) = 0$ is unique. Combining with the assertion in Part ($i$), we get that $\bar{p}_{\mathrm{opt}}$ is the unique root of $g(\beta)=0$. 

This completes the proof.

\clearpage

\end{document}